\providecommand{\tabularnewline}{\\}
\theoremstyle{plain}
\newtheorem{thm}{\protect\theoremname}
\theoremstyle{definition}
\newtheorem{defn}[thm]{\protect\definitionname}
\newenvironment{proof}[1][\protect\proofname]{\par
	\normalfont\topsep6\p@\@plus6\p@\relax
	\trivlist
	\itemindent\parindent
	\item[\hskip\labelsep\scshape #1]\ignorespaces
}{%
	\endtrivlist\@endpefalse
}
\providecommand{\proofname}{Proof}
\theoremstyle{remark}
\newtheorem{rem}[thm]{\protect\remarkname}
\theoremstyle{plain}
\theoremstyle{definition}
\theoremstyle{remark}
\def\utilde#1{\mathord{\vtop{\ialign{##\crcr
$\hfil\displaystyle{#1}\hfil$\crcr\noalign{\kern1.5pt\nointerlineskip}
$\hfil\tilde{}\hfil$\crcr\noalign{\kern1.5pt}}}}}
\definecolor{gray1}{rgb}{0.9, 0.9, 0.9}
\definecolor{gray2}{rgb}{0.95, 0.95, 0.95}
\providecommand{\definitionname}{Definition}
\providecommand{\remarkname}{Remark}
\providecommand{\theoremname}{Theorem}
\begin{document}

\begin{frontmatter}{}

\title{Inconsistency indices for incomplete pairwise comparisons matrices}

\author{Konrad Ku\l akowski\corref{cor1}}

\ead{kkulak@agh.edu.pl}

\address{AGH University of Science and Technology, Faculty of Electrical Engineering,
Automatics, Computer Science and Biomedical Engineering, Kraków, Poland}

\author{Dawid Talaga\corref{cor2}}

\ead{talagadawid@gmail.com}

\address{The Higher Theological Seminary of the Missionaries, Kraków, Poland}
\begin{abstract}
Comparing alternatives in pairs is a very well known technique of
ranking creation. The answer to how reliable and trustworthy ranking
is depends on the inconsistency of the data from which it was created.
There are many indices used for determining the level of inconsistency
among compared alternatives. Unfortunately, most of them assume that
the set of comparisons is complete, i.e. every single alternative
is compared to each other. This is not true and the ranking must sometimes
be made based on incomplete data.

In order to fill this gap, this work aims to adapt the selected twelve
existing inconsistency indices for the purpose of analyzing incomplete
data sets. The modified indices are subjected to Monte Carlo experiments.
Those of them that achieved the best results in the experiments carried
out are recommended for use in practice.
\end{abstract}
\begin{keyword}
pairwise comparisons; inconsistency; incomplete matrices; AHP
\end{keyword}

\end{frontmatter}{}

\section{Introduction}

\subsection{On pairwise comparisons}

People have been making decisions since time began. Some of them are
very simple and come easily but other, more complicated, ones require
deeper analysis. Often, when many complex objects are compared, it
is difficult to choose the best one. The pairwise comparisons (PC)
method may help to solve this problem. Probably the first well-documented
case of using the PC method is the voting procedure proposed by \emph{Ramon
Llull }\citep{Colomer2011rlfa} - a thirteenth century alchemist and
mathematician. In \emph{Llull's} algorithm, the candidates were compared
in pairs - one against the other, and the winner was the one who won
in the largest number of direct comparisons. Later on, in the eighteenth
century, \emph{Llull's} voting system was reinvented by \emph{Condorcet}
\citep{Kulakowski2016note}. The next step came from \emph{Fechner}
\citep{Fechner1860edp} and \emph{Thurstone} \citep{Thurstone1927tmop}
who enabled the method to be used quantitatively for assessing intangible
social quantities. In the twentieth century, the PC method was a significant
component of the social choice and welfare theory \citep{Arrow1950adit,Sen1977scta}.
Currently, the PC method is very often associated with \emph{The Analytic
Hierarchy Process (AHP)}. In his seminal work on \emph{AHP,} \emph{Saaty}
\citep{Saaty1977asmf} combined a hierarchy together with pairwise
comparisons, which allowed the comparison of significantly more complex
objects than was possible before. In this work, we deal with the quantitative
and multiplicative PC method, that is, the basis of AHP.

The \emph{PC} method stems from the observation that it is much easier
for a man to compare objects pair by pair than to assess all the objects
at once . However, comparing in pairs presents us with various challenges.
One of them is the selection of the priority deriving method, including
the case when the set of comparisons is incomplete. Another, equally
important, one is the situation in which different comparisons may
lead to different or, even worse, opposing conclusions. All these
questions are extensively debated in the literature \citep{Ho2017tsot,Jablonsky2015aosp,Kulakowski2015otpo,Saaty1998rbev}.
However, one of them does not seem to have been sufficiently explored
- the co-existence of inconsistency and incompleteness. Namely, one
of the assumptions of \emph{AHP} says that the higher the inconsistency
of the set of paired comparisons, the lower the reliability of the
ranking computed. This assumption has its supporters \citep{Saaty1977asmf}
and opponents \citep{Forman1993fafa}, however, in general, most researchers
agree that high inconsistency may be the basis for challenging the
results of the ranking. The concept of inconsistency in the PC method
has been thoroughly studied and resulted in a number of works \citep{Brunelli2016atno,Brunelli2015apoi,Brunelli2013iifp,Koczkodaj2013oaoi}.
The original PC method assumes that each alternative has to be compared
with each other. However, researchers and practitioners quickly realized
that making so many comparisons can be difficult and sometimes even
impossible. For this reason, they proposed methods for calculating
the ranking based on incomplete sets of pairwise comparisons \citep{Lundy2016tmeo,Pan2014arpb,Srdjevic2014fltb,Fedrizzi2013osii,Fedrizzi2007ipca,Harker1987amoq}.

\subsection{Motivation}

Inconsistency of complete pairwise comparisons is well understood
and thoroughly studied in the literature \citep{Brunelli2018aaoi}.
One can easily find at least a dozen well-known indices allowing to
determine the level of inconsistency. In the case of incomplete PC
matrices, however, the phenomenon of inconsistency remains a relatively
little explored area. There are only a few proposals of inconsistency
indices for incomplete PC. One of them has been proposed by \emph{Harker}
\citep{Harker1987amoq}, later on, developed by \emph{Wedley} \citep{Wedley1993cpfi}.
The more recent index comes from \emph{Oliva et al}. \citep{Oliva2017sada}.
\emph{Bozóki} et al. proposed using the value of the logarithmic least
square criterion as the inconsistency measure \citep{Bozoki2010ooco}. 

The purpose of this work is to provide the readers with other inconsistency
indices for incomplete PC. However, the authors decided not to create
new indices but to adapt existing ones so that they could be used
in the context of incomplete PC. As a result, new versions for eight
inconsistency indices have been proposed including \emph{Koczkodaj's}
index \citep{Koczkodaj1993ando}, triad based indices \citep{Kulakowski2014tntb},
\emph{Salo and Hämäläinen} index \citep{Salo1995ppta}, geometric
consistency index \citep{Crawford1985taos,Aguaron2003tgci}, \emph{Golden-Wang}
index \citep{Golden1989aamo}, \emph{Barzilai's} relative error \citep{Barzilai1998cmfp}. 

One can expect that a useful inconsistency index should be resistant
to random deletion of comparisons (a random increase of incompleteness).
Thus, during the Montecarlo experiment, all the newly redefined indices,
including Harker's index and Bozóki's criterion, were compared for
their robustness in a situation of random data deletion. The proposed
approach allows assessing the credibility of the considered indices
concerning incompleteness. 

In AHP, the assessment of ranking veracity is inseparably linked to
the concept of inconsistency indices. The purpose of our paper is
to propose a variety of inconsistency indexes for incomplete PC and
identify those that may be particularly useful. We also realize that
the relationship between inconsistency and incompleteness must be
subject to further study \citep{Kulakowski2019tqoi}. Despite the
preliminary nature of Montecarlo results, we believe that this work
will contribute to the increase in the popularity of incomplete pairwise
comparisons as the ranking method and make it more reliable and trustworthy.

\subsection{Article organization}

The fundamentals of the pairwise comparisons method, including priority
deriving algorithms for complete and incomplete paired comparisons
and the concept of inconsistency, are introduced in Section \ref{sec:Preliminaries}.
Due to the relatively large number of indices considered in this work,
they are briefly reviewed in Section \ref{sec:Inconsistency-indexes}.
In (Section \ref{sec:Inconsistency-indexes-for}), we briefly describe
the main assumptions of our proposal for the extensions of selected
indexes. In particular, we propose dividing the indices into two groups:
the matrix based indices and the ranking based indices. The extensions
of the indices from the first group are described in (Section \ref{sec:Matrix-based-indices}),
while modifications of indices from the second group can be found
in (Section \ref{sec:Ranking-based-indices}). Proposals for extensions
considered in (Sections \ref{sec:Inconsistency-indexes-for} - \ref{sec:Ranking-based-indices})
are followed by a numerical experiment carried out in order to assess
the impact of incompleteness on the disturbances of the considered
indices (Section \ref{sec:Numerical-experiment}). Discussion and
summary (Section \ref{sec:Discussion}) close the article.

\section{Preliminaries\label{sec:Preliminaries}}

\subsection{The Pairwise Comparisons Method}

The \emph{PC} method is used to create a ranking of alternatives.
Let us denote them by $A=\{a_{1},\ldots,a_{n}\}$. Creating a ranking
in this case means assigning to each alternative a certain positive
real number $w(a_{i})$, called weight or priority. To achieve this,
the pairwise comparisons method compares each alternative with all
the others, then, based on all these comparisons, computes the priorities
for all alternatives. As alternatives are compared by experts in pairs,
it is convenient to represent the set of paired comparisons in the
form of a pairwise comparisons (PC) matrix.
\begin{defn}
\label{def:pc-matrix}The matrix $C$ is said to be a PC matrix 
\[
C=\left(\begin{array}{cccc}
1 & c_{12} & \cdots & c_{1n}\\
\vdots & 1 & \cdots & \vdots\\
\vdots & \cdots & \ddots & \vdots\\
c_{n1} & \cdots & c_{n,n-1} & 1
\end{array}\right),
\]
if $c_{ij}\in\mathbb{R}_{+}$ corresponds to the direct comparisons
of the i-th and j-th alternatives.
\end{defn}
For example: if, in an expert's opinion, the i-th alternative is two
times more preferred than the j-th alternative, then $c_{ij}$ receives
the value $2$. Of course, in such a situation it is natural to expect
that the j-th alternative is two times less preferred than the i-th
alternative, which, in turns, leads to $c_{ji}=1/2$.
\begin{defn}
The PC matrix in which $c_{ij}=1/c_{ji}$ is said to be \emph{reciprocal},
and this property is called \emph{reciprocity}.
\end{defn}
In further considerations in this paper, we will deal only with reciprocal
matrices. If the expert is indifferent when comparing $a_{i}$ and
$a_{j}$ then the corresponding pairwise comparisons result in $c_{ij}=1$,
which means a tie between the compared alternatives.

In practice, it is very often assumed that the results of pairwise
comparisons fall into a certain real and positive interval $1/s\leq c_{ij}\leq s$.
The value $s$ determines the range of the scale. For example, Saaty
\citep{Saaty2005taha} recommends the use of a discrete scale where
$c_{ij}\in\{1/9,1/8,\ldots1/2,1,2,\ldots,8,9\}$. Other researchers,
however, suggest other scales \citep{Franek2014jsac,Yuen2014cls,Dong2008acso}.
For the purpose of this article, we assume that $1/s\leq c_{ij}\leq s$
where $s=9$. 

Based on the PC matrix, the priorities of individual alternatives
are calculated (Fig. \ref{fig:pc-method}).

\begin{figure}[h]
\begin{centering}
\includegraphics[scale=0.5]{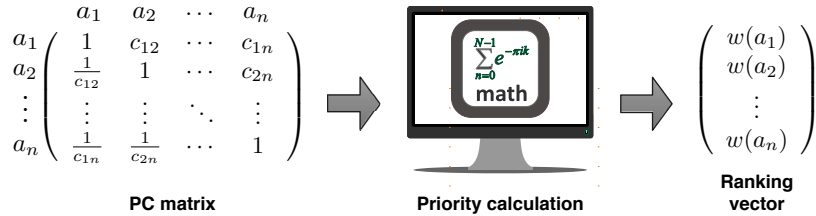}
\par\end{centering}
\caption{The PC method}
\label{fig:pc-method}
\end{figure}
 It is convenient to present them in the form of a weight vector (\ref{eq:weight-vector})
so that the i-th position in the vector denotes the weight of the
i-th alternative. 
\begin{equation}
w=\left[w(a_{1}),w(a_{2}),\ldots,w(a_{n})\right]^{T}.\label{eq:weight-vector}
\end{equation}

There are many procedures enabling the construction of a priority
vector. The first, and probably still the most popular, is one using
the eigenvector of $C$. According to this approach, referred to in
the literature as EVM (The Eigenvalue Method), the principal eigenvector
of the PC matrix is adopted as the priority vector $w$. For convenience,
the principal eigenvector is rescaled so that all its entries add
up to $1$. Formally, let 
\begin{equation}
Cw_{\textit{max}}=\lambda_{\textit{max}}w_{\textit{max}}\label{eq:evm-method-1}
\end{equation}
be the matrix equation so that $\lambda_{\textit{max}}$ is a principal
eigenvalue (spectral radius) of $C$. Then $w_{\textit{max}}$ is
a principal eigenvector of $C$ (due to the Perron-Frobenus theorem,
such a real and positive one exists \citep{Meyer2000maaa}). Thus,
the ranking vector $w_{\textit{ev}}$ is given as (\ref{eq:weight-vector})
where
\begin{equation}
w_{\textit{ev}}(a_{i})=\frac{w_{\textit{max}}(a_{i})}{\sum_{i=1}^{n}w_{\textit{max}}(a_{i})}.\label{eq:evm-method-2}
\end{equation}
Another popular priority deriving procedure is called GMM (geometric
mean method) \citep{Crawford1987tgmp}. In this approach, the priority
of an individual alternative is defined as an appropriately rescaled
geometric mean of the i-th row of $C$. Thus, the priority of the
i-th alternative is formally given as: 
\begin{equation}
w_{\textit{gm}}(a_{i})=\frac{\sqrt[n]{\prod_{j=1}^{n}c_{ij}}}{\sum_{i=1}^{n}w_{\textit{gm}}(a_{i})}.\label{eq:gmm-method}
\end{equation}
There are many other priority deriving methods \citep{Jablonsky2015aosp,Kou2014acmm,Lundy2016tmeo}.
In general, all of them lead to the same ranking vector unless the
set of paired comparisons is inconsistent. Let us look at inconsistency
a little bit closer.

\subsection{Inconsistency\label{subsec:Inconsistency}}

If we compare two pairs of alternatives $(a_{i},a_{k})$ and $(a_{k},a_{j})$,
then the results of these two comparisons also provide us with information
about the mutual relationship between $a_{i}$ and $a_{j}$. Indeed,
the result of the comparisons $a_{i}$ vs. $a_{k}$ is a positive
and real number $c_{ik}$ being an approximation of the ratio between
the priorities of the i-th and k-th alternatives i.e. 
\begin{equation}
\frac{w(a_{i})}{w(a_{k})}\approx c_{ik}.\label{eq:inc-eq-1}
\end{equation}
 Similarly, 
\begin{equation}
\frac{w(a_{k})}{w(a_{j})}\approx c_{kj}.\label{eq:inc-eq-2}
\end{equation}
This implies, of course, that
\begin{equation}
c_{ik}c_{kj}\approx c_{ij}.\label{eq:inc-eq-3}
\end{equation}
If a PC matrix is consistent, then the above formula turns into equality,
i.e. $c_{ik}c_{kj}=c_{ij}$ for every $i,k$, and $j\in\{1,\ldots,n\}$
where $i\neq k,\,k\neq j$ and $i\neq j$. Let us define these three
values $c_{ik}c_{kj}$ and $c_{ij}$ formally.
\begin{defn}
\label{def:triad-def}A group of three entries $\left(c_{ik},c_{kj},c_{ij}\right)$
of the PC matrix $C$ is called a triad if $i,j,k\in\{1,\ldots,n\}$
and $i\neq j,j\neq k$ and $i\neq k$.
\end{defn}
The above considerations also allow the introduction of the definition
of inconsistency.
\begin{defn}
\label{def:A-PC-matrix}A PC matrix $C=[c_{ij}]$ is said to be inconsistent
if there is a triad $c_{ik},c_{kj}$ and $c_{ij}$ for $i,j,k\in\{1,\ldots,n\}$
such that $c_{ik}c_{kj}\neq c_{ij}$. Otherwise $C$ is consistent.
\end{defn}
For the purpose of the article, any three values in the form $c_{ik},c_{kj}$
and $c_{ij}$ will be called a \emph{triad}. If there is a \emph{triad}
such that $c_{ik}c_{kj}\neq c_{ij}$ then the triad and, as follows,
the matrix $C$ are said to be \emph{inconsistent}.

It is easy to observe that when the PC matrix is consistent then (\ref{eq:inc-eq-1})
and (\ref{eq:inc-eq-2}) are also equalities. Hence, the consistent
PC matrix takes the form: 
\[
C=\left(\begin{array}{cccc}
1 & \frac{w(a_{1})}{w(a_{2})} & \cdots & \frac{w(a_{1})}{w(a_{n})}\\
\vdots & 1 & \cdots & \vdots\\
\vdots & \cdots & \ddots & \vdots\\
\frac{w(a_{n})}{w(a_{1})} & \cdots & \frac{w(a_{n})}{w(a_{n-1})} & 1
\end{array}\right).
\]

In practice, the PC matrix arises during tedious and error-prone work
of the experts who compare alternatives pair by pair. Therefore, due
to various reasons, inconsistency occurs. Since the data that we use
to calculate rankings are inconsistent, the question arises concerning
the extent to which the obtained ranking is credible. A highly inconsistent
PC matrix can mean that the expert preparing the matrix was inattentive,
distracted or just lacking sufficient knowledge and skill to carry
out the assessment. Therefore, most researchers agree that highly
inconsistent PC matrices result in unreliable rankings and should
not be considered. On the other hand, if the PC matrix is not too
inconsistent, the ranking can be successfully calculated. To determine
what the inconsistency level of the given PC matrix is, inconsistency
indices are used. Because there are over a dozen of them (sixteen
indices are subjected to the Montecarlo experiment described in this
work), their exact description has been included in Section \ref{sec:Inconsistency-indexes}

\subsection{Incompleteness\label{subsec:Incompleteness}}

As stated above, the PC matrix contains mutual comparisons of all
alternatives taken into account. However, from a practical point of
view, completing all necessary comparisons can be difficult. The first
reason is the square increase in the number of comparisons in relation
to the number of alternatives considered (providing reciprocity $n$
alternatives implies at least $n(n-1)/2$ comparisons). As it is easy
to see, for $7$ alternatives we need $21$ comparisons but for $9$
we need as many as $36$ comparisons and so on. Therefore, in the
case of a large number of alternatives, gathering all comparisons
is just labor-intensive. This is especially true as these comparisons
are usually made by experts who, as always, suffer from a lack of
time. For that reason, Wind and Saaty \citep{Wind1980maot} indicated
the optimal number of alternatives as $7\pm2$. Another reason for
the lack of comparison can be the inability of the expert to compare
two alternatives. The source of this impossibility may be ethical
or moral doubts or a weaker knowledge of the particular issue \citep{Harker1987ipci}.
All the above reasons led to the necessity of introducing incomplete
PC matrices, that is, ones in which some entries are not defined.
For the purpose of the article, the missing (undefined) comparisons
in the PC matrix are denoted as $?$. Let us define incomplete PC
matrices formally.
\begin{defn}
\label{def:an-incomplete-pc-matrix-def}A PC matrix $C=[c_{ij}]$
is said to be an incomplete PC matrix if $c_{ij}\in\mathbb{R}_{+}\cup\{?\}$
where $c_{ij}=?$ means that the comparison of the i-th and j-th alternatives
is missing.
\end{defn}
In the case of missing values, the reciprocity condition would mean
that $c_{ij}=?$ implies $c_{ji}=?$.

Bearing in mind all the above problems with obtaining a complete set
of comparisons, Harker \citep{Harker1987amoq,Harker1987ipci} proposed
the extension of EVM for an incomplete PC matrix.  HM (The Harker's
method) requires the creation of an auxiliary matrix $B=[b_{ij}]$,
in which

\begin{equation}
b_{ij}=\begin{cases}
c_{ij} & \text{if}\,c_{ij}\,\text{is a real number greater than}\,0\\
0 & \text{otherwise}\\
m_{i} & \text{is the number of unanswered questions in the i-th row of}\,C
\end{cases}.\label{eq:harker-matrix-form}
\end{equation}
Finding and scaling a principal eigenvector of $B$ leads directly
to the desired numerical ranking.

The well-known GMM (\ref{eq:gmm-method}) also has its own extension
for the incomplete PC matrices \citep{Bozoki2010ooco}. According
to the ILLS (incomplete logarithmic least square) method, one needs
to solve the linear equation: 
\begin{eqnarray}
R\widehat{w} & = & g\label{eq:bfr-log-equation}\\
\widehat{w}(a_{1}) & = & 0
\end{eqnarray}

where $R=[r_{ij}]$ is the Laplacian matrix \citep{Merris1994lmog}
such that
\begin{equation}
r=\begin{cases}
\alpha & \text{if\,}i=j\,\text{where}\,\alpha\,\text{is the number of ? in the i-th row}\\
-1 & \text{if}\,c_{ij}\neq?\\
0 & \text{if}\,c_{ij}=?
\end{cases},\label{eq:r-matrix-bfr-method}
\end{equation}

$g$ is the constant term vector $g=\left[g_{1},\ldots,g_{n}\right]^{T}$
where
\[
g_{i}=\log\prod_{\substack{c_{ij}\neq?\\
j=1,\ldots,n
}
}c_{ij},
\]

and $\widehat{w}$ is the logarithmized priority vector $\widehat{w}=\left[\widehat{w}(a_{i}),\ldots,\widehat{w}(a_{n})\right]$,
i.e. $\widehat{w}(a_{i})=\log w(a_{i})$ for $i=1,\ldots,n$ where
$w$ is the appropriate priority vector\footnote{In practice, $w$ should also be rescaled so that all its entries
sum up to $1$}. The ILLS approach is also optimal in the same sense as the GMM method
is \citep{Crawford1987tgmp,Bozoki2010ooco}. It is worth to note that
the above method can also be formulated in terms of the geometric
mean \citep{kulakowski2019otgm}.

In addition to these two methods, there are also other ways to deal
with incomplete matrices, for example, the entropy approach \citep{Pan2014arpb}
or the spanning tree approach \citep{Siraj2012east}.

\subsection{Graph representation}

It is often convenient to consider a set of pairwise comparisons,
a PC matrix, as a graph. For this reason, let us introduce the definition
of a graph of the given PC matrix.
\begin{defn}
\label{def:graph-of-a-matrix}A directed graph $T_{C}=(V,E,L)$ is
said to be a graph of $C$ if $V=\{a_{1},\ldots,a_{n}\}$ is a set
of vertices, $E\subset V^{2}\backslash\bigcup_{i=1}^{n}(a_{i},a_{i})$
is a set of ordered pairs called directed edges, $L:V^{2}\rightarrow\mathbb{R}_{+}$
such that $L(a_{i,},a_{j})=c_{ij}$ is the labeling function, and
$C=[c_{ij}]$ is the $n\times n$ PC matrix.
\end{defn}
For example, let us consider the following incomplete PC matrix in
which $c_{34}$ and $c_{43}$ are undefined: 
\[
C=\left(\begin{array}{cccc}
1 & \frac{2}{3} & \frac{4}{3} & \frac{1}{2}\\
\frac{3}{2} & 1 & 2 & \frac{3}{4}\\
\frac{3}{4} & \frac{1}{2} & 1 & ?\\
2 & \frac{4}{3} & ? & 1
\end{array}\right)
\]
The graph $T_{C}$ is shown in Fig. \ref{fig:assoc-graph}. Providing
that the matrix $C$ is reciprocal\footnote{In the literature, non-reciprocal PC matrices are also considered
\citep{Kulakowski2016srot,Hovanov2008dwfg}.}, the upper triangle of $C$ contains all the information necessary
to create a ranking. Therefore, instead of the graph $T_{C}$ one
can analyze a graph of the upper triangle of $C$. Thanks to this,
we obtain a simplified drawing of the graph, without losing essential
information. Let $\text{UT}(C)$ denote the upper triangle of $C$.
The graph of the upper triangle of $C$ is shown in Fig. \ref{fig:assoc-graph-up-triangle}.

\begin{figure}[h]
\begin{centering}
\subfloat[$T_{C}$ - graph of $C$]{\centering{}\includegraphics[scale=0.5]{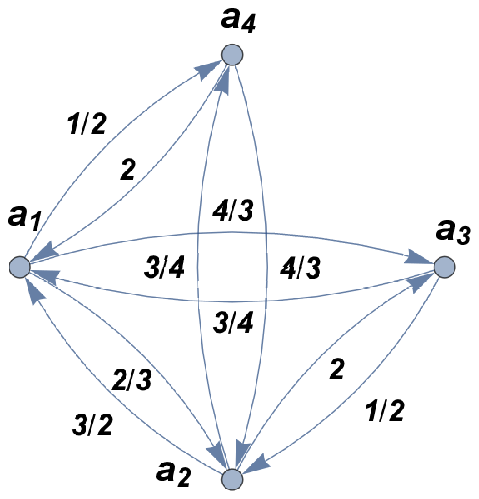}\label{fig:assoc-graph}}~~~\subfloat[$T_{\text{UP}(C)}$ - the graph of the upper triangle of $C$]{\centering{}\includegraphics[scale=0.5]{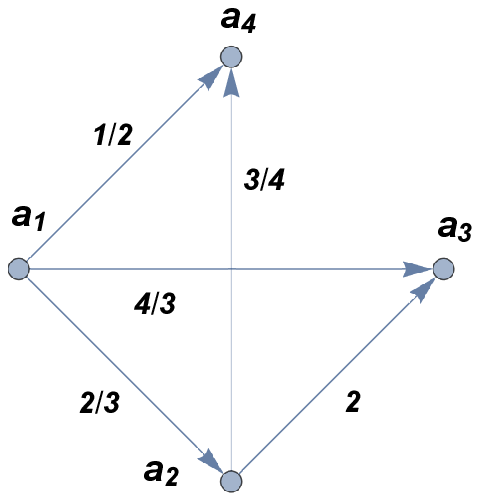}\label{fig:assoc-graph-up-triangle}}
\par\end{centering}
\caption{Graph representations of the matrix $C$}
\end{figure}

For incomplete PC matrices, one property of the graph is particularly
important. This is strong connectivity \citep{Quarteroni2000nm}.
\begin{defn}
A directed graph $T_{C}$ is strongly connected if for any pair of
distinct vertices $a_{i}$ and $a_{j}$ there is an oriented path
$(a_{i},a_{r_{1}})$, $(a_{r_{1}},a_{r_{2}})$, $\ldots,(a_{r_{k}},a_{j})$
in $E$ from $a_{i}$ to $a_{j}$.
\end{defn}
The matrix $C$ for which $T_{C}$ is strongly connected is called
\emph{irreducible} \citep{Quarteroni2000nm}. It is easy to notice
that two distinct alternatives $a_{i}$ and $a_{j}$ (through the
oriented path leading from $a_{i}$ to $a_{j}$) can be compared together
only if $T_{C}$ is strongly connected. This immediately leads to
the observation that only when the PC matrix is irreducible (i.e.
the appropriate graph is strongly connected), are we able to compute
the ranking \citep{Siraj2012east,Harker1987amoq}. For this reason,
in the article, we only deal with irreducible PC matrices.

For the purposes of this article, let us also define the concept of
the cycle in a graph.
\begin{defn}
\label{def:cycle-def-1}An ordered sequence of distinct vertices $p=a_{i_{1}},a_{i_{2}},\ldots,a_{i_{m}}$
such that $\left\{ a_{i_{1}},a_{i_{2}},\ldots,a_{i_{m}}\right\} \in V$
is said to be a path between $a_{i_{1}}$ and $a_{i_{m}}$ with the
length $m-1$ in $T_{C}=(V,E,L)$ if $(a_{i_{1}},a_{i_{2}})$, $(a_{i_{2}},a_{i_{3}}),\ldots$,$(a_{i_{m-1}},a_{i_{m}})\in E$.
\end{defn}
and similarly
\begin{defn}
\label{def:cycle-def}A path $s$ between $a_{i_{1}}$ and $a_{i_{m}}$
with the length $m-1$ is said to be a simple cycle with the length
$m$ if also $(a_{i_{m}},a_{i_{1}})\in E$.
\end{defn}
In the case of a cycle, it is not important which element in the sequence
of vertices is first. Thus, if $p=a_{i_{1}},a_{i_{2}},\ldots,a_{i_{m}}$
is a cycle then $q=a_{i_{2}},\ldots,a_{i_{m}},a_{i_{1}}$ means the
same cycle, i.e. $p=q$. 
\begin{defn}
\label{def:all-cycles}Let $T_{C}=(V,E,L)$ be a graph of $C$. Then
the set of all paths between $a_{i}$ and $a_{j}$ in $T_{C}$ is
defined as $\mathcal{P}_{C,i,j}\overset{\textit{df}}{=}\{p=a_{i_{1}},a_{i_{2}},\ldots,a_{i_{m}}\,\,$
is a path between $\text{\,}\,a_{i_{1}}\text{and}\,\,a_{i_{m}}\,\text{in}\,\,T_{C}\}$.
Similarly, the set of all cycles longer than $q$ in $T_{C}$ is defined
as $\mathcal{S}_{C,q}\overset{\textit{df}}{=}\{s=a_{i_{1}},a_{i_{2}},\ldots,a_{i_{m}}\,\,$
is a cycle of $C\,\,\text{for}\,m>q\}$.
\end{defn}
The number of adjacent edges to the given vertex $a$ usually is called
the degree of $a$ and written as $\deg(a)$. Based on the degree
of vertex we can define a degree matrix. 
\begin{defn}
\label{def:degree-matrix}Let $T_{\textit{UP}(C)}=(V,E,L)$ be a graph
of $C$. The degree matrix $D=[d_{ij}]$ of $T_{\textit{UP}(C)}$
is a diagonal matrix such that $d_{ii}=\deg(a_{i})$ for and $d_{ij}=0$
for $i,j=1,\ldots,n$ and $i\neq j$.
\end{defn}

\section{Inconsistency indices \label{sec:Inconsistency-indexes}}

In his seminal work, \emph{Saaty} \citep{Saaty1977asmf} proposed
a measurement of inconsistency as a way of determining credibility
of the ranking. Since then, many inconsistency indices have been created
allowing the degree of inconsistency in the set of paired comparisons
to be determined \footnote{As in the paper we deal with cardinal (quantitative) pairwise comparisons,
we do not consider ordinal inconsistency of the ordinal pairwise comparisons.
A good example of the ordinal inconsistency index is the generalized
consistency coefficient \citep{Kulakowski2018iito}.}. Below, we briefly present several inconsistency indices being the
subject of extension as well as a few already existing inconsistency
indexes for incomplete matrices. A systematic review of the various
inconsistency indexes can be found in Brunelli \citep{Brunelli2018aaoi}. 

\subsection{Inconsistency indices for complete PC matrices}

The list of indices is opened by  the geometric consistency index
(GCI). GCI given as: 
\begin{equation}
I_{\textit{G}}=\frac{2}{(n-1)(n-2)}\sum_{i=1}^{n}\sum_{j=i+1}^{n}\log^{2}e_{ij}.\label{eq:GCI}
\end{equation}
where 
\begin{equation}
e_{ij}=c_{ij}\frac{w(a_{j})}{w(a_{i})},\,\,\,\,\,\,i,j=1,...,n,\label{eq:gci-error}
\end{equation}
was proposed by \emph{Crawford} and \emph{Williams} \citep{Crawford1985anot},
and then called as the geometric consistency index by Aguaròn and
Moreno-Jimènez \citep{Aguaron2003tgci}. 

In contrast to the previous indices, a measure defined by \emph{Koczkodaj}
does not examine the average inconsistency of the set of paired comparisons
\citep{Koczkodaj1993ando,Duszak1994goan}. Instead, it spots the highest
local inconsistency and adopts it as an inconsistency of the examined
matrix. A local inconsistency is determined by means of the triad
index $K_{i,k,j}$ defined as follows:

\begin{equation}
K_{i,j,k}=\min\left\{ \left|1-\frac{c_{ik}c_{kj}}{c_{ij}}\right|,\left|1-\frac{c_{ij}}{c_{ik}c_{kj}}\right|\right\} .\label{eq:koczkodaj-triad-index}
\end{equation}
The inconsistency index for $C$ obtains the form:
\begin{equation}
K=\max\left\{ K_{i,j,k}\,\,|\,\,1\leq i<j<k\leq n\right\} \label{eq:K}
\end{equation}

\emph{Ku\l akowski} and \emph{Szybowski} proposed two other inconsistency
indices \citep{Kulakowski2014tntb}, which are also based on triads\footnote{The first of them was later proposed by Grzybowski \citep{Grzybowski2016nroi}}.
They both use the Koczkodaj triad index $K_{ijk}$ (\ref{eq:koczkodaj-triad-index}).
The indices are designed as the average of all possible $K_{ijk}$
given as follows\footnote{Note that $\binom{n}{3}=\frac{n(n-1)(n-2)}{6}$.}:

\begin{equation}
I_{1}=\frac{6\sum_{\{i,j,k\}\in T}K_{ijk}}{n(n-1)(n-2)},\label{eq:I1}
\end{equation}

\begin{equation}
I_{2}=\frac{6\sqrt{\sum_{\{i,j,k\}\in T}K_{ijk}^{2}}}{n(n-1)(n-2)},\label{eq:I2}
\end{equation}
where $T=\left\{ \{i,j,k\}:i\neq j,i\neq k,j\neq k\,\text{and}\,1\leq i,j,k\leq n\right\} $.
Both indices can be combined together to create new coefficients.
Based on this observation,\emph{ }the authors proposed two parametrized
families of indices: 

\begin{equation}
I_{\alpha}=\alpha K+(1-\alpha)I_{1},\label{eq:Ia}
\end{equation}
where $0\leq\alpha\leq1$, and 

\begin{equation}
I_{\alpha,\beta}=\alpha K+\beta I_{1}+(1-\alpha-\beta)I_{2},\label{eq:Iab}
\end{equation}
where $0\leq\alpha+\beta\leq1$. 

\emph{Golden} and \emph{Wang} proposed another inconsistency index
\citep{Golden1989aamo}. According to this approach the priority vector
was calculated using the geometric mean method, then scaled to add
up to 1. In this way, the vector $g^{*}=[g_{1,}^{*},...,g_{n}^{*}]$
was obtained, where $C=[c_{ij}]$ is an $n$ by $n$ PC matrix. Then,
every column is scaled so that the sum of its elements is $1$. Let
us denote the matrix with the rescaled columns by $C^{*}=[c_{ij}^{*}]$.
The inconsistency index is defined as follows:
\begin{equation}
\textit{GW}=\frac{1}{n}\sum_{i=1}^{n}\sum_{j=1}^{n}\mid c_{ij}^{*}-g_{i}^{*}\mid.\label{eq:gw-index}
\end{equation}

The index proposed by \emph{Salo} and \emph{Hämäläinen} \citep{Salo1995ppta}
requires an auxiliary interval matrix $R$ to be prepared. In this
matrix, every element is a pair corresponding to the highest and the
lowest approximation of $c_{ij}$. The $n\times n$ matrix $R$ is
given as

\begin{equation}
R=\left(\begin{array}{ccc}
\left(\underline{r}_{11},\overline{r}_{11}\right) & \ldots & \left(\underline{r}_{1n},\overline{r}_{1n}\right)\\
\vdots & \ddots & \vdots\\
\left(\underline{r}_{n1},\overline{r}_{n1}\right) & \ldots & \left(\underline{r}_{nn},\overline{r}_{nn}\right)
\end{array}\right),
\end{equation}
where 
\[
\underline{r_{ij}}=\min\left\{ c_{ik}c_{kj}\mid k=1,\ldots,n\right\} ,\,\,\,\text{and}\,\,\,\overline{r_{ij}}=\max\left\{ c_{ik}c_{kj}\mid k=1,\ldots,n\right\} 
\]
As every $c_{ik}c_{ji}$ is an approximation of $c_{ij}$ then $\underline{r_{ij}}$
is the lowest and $\overline{r_{ij}}$ is the highest approximation
of $c_{ij}$. Finally, the inconsistency index is:
\begin{equation}
\textit{\ensuremath{I_{\textit{SH}}}}=\frac{2}{n(n-1)}\sum_{i=1}^{n-1}\sum_{j=i+1}^{n}\frac{\overline{r}_{ij}-\underline{r}_{ij}}{\left(1+\overline{r}_{ij}\right)\left(1+\underline{r}_{ij}\right)}.
\end{equation}
For further reference, see \citep{Brunelli2015itta}. 

The last of the extended indexes was proposed by\emph{ Barzilai} \citep{Barzilai1998cmfp}.
It requires calculation of the weight vector using the arithmetic
mean method for each row and the preparation of two auxiliary matrices.
Let us denote $\Delta_{i}=\frac{1}{n}\sum_{j=1}^{n}\widehat{c}_{ij},$
where $\widehat{C}=[\widehat{c}_{ij}]$ is an $n$ by $n$ additive
PC matrix, i.e. such that $c_{ij}\in\mathbb{R}$ and $c_{ij}=-c_{ji}$.
The two auxiliary matrices are given as follows: $X=\left[x_{ij}\right]=\left[\Delta_{i}-\Delta_{j}\right]$,
$E=\left[e_{ij}\right]=\left[\widehat{c}_{ij}-x_{ij}\right]$. Ultimately,
the formula for the \emph{relative error} (considered as the inconsistency
index) is as follows:
\begin{equation}
\textit{RE}(\widehat{C})=\frac{\sum_{ij}e_{ij}^{2}}{\sum_{ij}\widehat{c}_{ij}^{2}}.
\end{equation}
Of course, $\textit{RE}$ was defined for additive PC matrices. Thus,
for the purpose of multiplicative PC matrices, \emph{Barzilai} proposes
to transform it using a $\log$ function with any base. Thus, for
the PC matrix $C=[c_{ij}]$ and $\widehat{C}=[\log c_{ij}]$ we obtain:
$\textit{RE}(C)\overset{\textit{df}}{=}\textit{RE}(\widehat{C})$.

\subsection{Inconsistency indices for incomplete PC matrices}

Inconsistency indexes for incomplete PC matrices are definitely less
than for complete matrices. The first of them, probably the earliest
defined is the Saaty's index for incomplete PC matrices defined by
Harker \citep{Harker1987amoq} as: 
\[
\widetilde{\textit{CI}}=\frac{\widetilde{\;\lambda}_{\textit{max}}-n}{n-1}
\]
where $\widetilde{\lambda}_{\textit{max}}$ is the principal eigenvalue
of the auxiliary matrix $B$ (\ref{eq:harker-matrix-form}).  Following
\citep[p. 356]{Harker1987amoq}, the consistency index $\widetilde{\textit{CI}}$
can also be written as:

\[
\widetilde{\textit{CI}}=-1+\frac{1}{n(n-1)}\left(\sum_{i=1}^{n}m_{i}+\sum_{\begin{array}{c}
1\leq i<j\leq n\\
c_{ij}\neq?
\end{array}}\left(c_{ij}\frac{\widetilde{w}(a_{j})}{\widetilde{w}(a_{i})}+c_{ji}\frac{\widetilde{w}(a_{i})}{\widetilde{w}(a_{j})}\right)\right).
\]
where $\widetilde{w}=\left[\widetilde{w}(a_{1}),\ldots,\widetilde{w}(a_{n})\right]^{T}$
is the principal eigenvector of $B$.  Properties of $\widetilde{\textit{CI}}$
were also tested in \citep{Talaga2018ioip}.

Another method of measuring the inconsistencies of incomplete matrices
was proposed by Bozóki et al. \citep{Bozoki2010ooco}. According to
this approach all the possible completion of an incomplete PC matrix
are considered, then one that minimizes a certain inconsistency criterion
is chosen. The value of this criterion for the selected completion
can be considered as the inconsistency value of the given incomplete
PC matrix. Adopting the function: 
\[
\textit{LLS}(C,w)=\sum_{\substack{i,j=1\\
i\neq j
}
}^{n}\left(\log c_{ij}-\frac{\log w(a_{i})}{\log w(a_{j})}\right)^{2}
\]
as such criterion \citep{Crawford1987tgmp,Bozoki2010ooco} leads to
ILLS method (Sec. \ref{subsec:Incompleteness}).  However, $\textit{LLS}(C,w)$
can also be treated as a ranking based inconsistency index. Thus,
following \citep{Bozoki2010ooco}, we can adopt 
\[
\widetilde{L}(C)\overset{\textit{df}}{=}\textit{LLS}(Q_{C},w_{\textit{ILLS}})
\]
 as the inconsistency index for incomplete matrix $C$, where $Q_{C}=[q_{ij}]$
is an optimal completion of $C$ defined as: 
\[
q_{ij}=\begin{cases}
c_{ij} & \text{if}\,c_{ij}\neq?\\
\frac{w_{\textit{ILLS}}(a_{i})}{w_{\textit{ILLS}}(a_{j})} & \text{if}\,c_{ij}=?
\end{cases}.
\]
The last index considered has been proposed by Oliva et al. \citep{Oliva2017sada}
as 
\[
O(\mathcal{R})=\rho(D^{-1}\mathcal{S})-1.
\]
In the above equation $\mathcal{R}$ is an incomplete multiplicative
PC matrix in which every missing element is represented by $0$, matrix
$D$ is the degree matrix of the graph $T_{\textit{UP}(\mathcal{R})}$
(Def. \ref{def:degree-matrix}), $\rho$ stands for spectral radius
of a matrix, and $\mathcal{S}=\mathcal{R}-\textit{Id}$, where $\textit{Id}$
denotes $n\times n$ identity matrix.

\section{Extensions of inconsistency indexes for incomplete matrices\label{sec:Inconsistency-indexes-for}}

Among the indices listed above, two distinct groups can be distinguished.
The first group consists of indices based on the concept of a triad
(Def. \ref{def:triad-def}) and the idea of the triad's inconsistency
(Section \ref{subsec:Inconsistency}). According to this idea, the
analysis of three different entries of a PC matrix is able to reveal
the inconsistency. Of course, this analysis is local as it is limited
to three specific comparisons. However, if we take into account all
possible triads in $C$, our judgment as to the inconsistency will
become global and may act as an inconsistency index. In this approach,
the ranking method is not important. Inconsistency is estimated directly
using elements of the PC matrix and the definition of inconsistency
(Def. \ref{def:A-PC-matrix}). We will call all the indices for which
the above observation holds \emph{the matrix based indices}. This
group includes:
\begin{itemize}
\item Koczkodaj's inconsistency index,
\item Triad based average inconsistency indices,
\item Salo and Hamalainen index.
\end{itemize}
The second group of indices are those for which calculation of the
ranking is indispensable. The general idea behind all of the indices
in this group is that the ratio $w(a_{i})/w(a_{j})$ needs to be similar
or even identical to the value $c_{ij}$ for all $i,j\in\{1,\ldots,n\}$
(see \ref{eq:inc-eq-1} - \ref{eq:inc-eq-3}). Of course, to verify
the difference between $w(a_{i})/w(a_{j})$ and $c_{ij}$ we first
have to compute the ranking vector (\ref{eq:weight-vector}). For
this reason, each of the indices in this group is closely related
to some priority deriving method. For the purpose of this article,
we will call them the ranking based indices. This group includes\footnote{For the purpose of the Montecarlo experiment we also consider Harker's
extension of Saaty's consistency index \citep{Harker1987amoq}, Logarithmic
least square criterion \citep{Bozoki2010ooco} and Oliva et al. inconsistency
index \citep{Oliva2017sada}.}:
\begin{itemize}
\item Geometric consistency index,
\item Golden-Wang index,
\item Relative Error index
\end{itemize}
Of course, the above division is, to some extent, arbitrary. For example,
the geometric consistency index can be expressed using triad based
local inconsistency \citep{Brunelli2013anot}. The Harker's extension
of Saaty's consistency index can also be estimated using Koczkodaj's
consistency index \citep{Kulakowski2015otpo}. Despite the existing
relationships between different inconsistency indices, there is no
global framework unifying them. Indeed, considering the work on the
axiomatization of inconsistency indexes \citep{Brunelli2015apoi,Koczkodaj2014oaoi,Brunelli2016raoi,Koczkodaj2018aoii}
it can be assumed that the creation of such a framework will be one
of the challenges for researchers shortly.

\subsection{Matrix based indices\label{subsec:Matrix-based-indices}}

Matrix based indices use triads to determine inconsistency. However,
in an incomplete PC matrix some triads might be missing. For example,
if $c_{ik}$ is undefined the triad $c_{ik},c_{kj}$ and $c_{ij}$
is also undefined. Of course, one may think that analysis of the remaining
triads allows us to assess the degree of matrix inconsistency. Unfortunately,
it is not true and in some circumstances this strategy fails. This
happens when the PC matrix does not have any triads and yet it is
irreducible. Let us consider the following PC matrix:

\begin{equation}
C=\left(\begin{array}{ccccccc}
1 & 1/2 & ? & ? & ? & ? & 1/7\\
2 & 1 & ? & 6 & 4 & 2 & ?\\
? & ? & 1 & 4 & 3 & 3/2 & ?\\
? & 1/6 & 1/4 & 1 & ? & ? & 1/2\\
? & 1/4 & 1/3 & ? & 1 & ? & 1/4\\
? & 1/2 & 2/3 & ? & ? & 1 & 1/3\\
7 & ? & ? & 2 & 4 & 3 & 1
\end{array}\right)\label{eq:example_matrix}
\end{equation}

The graph $T_{\text{UP}(C)}$ of the upper triangle of the above matrix
is shown in Fig. \ref{fig:no-triad-graph-ex}. As we can see, there
is no cycle\footnote{Remember that in the full graph $T_{C}$ each edge $(a_{i},a_{j})$
has its counterpart $(a_{j},a_{i})$.} (Def. \ref{def:cycle-def}) with the length $3$, thus there are
no triads in $C$. It is easy to observe that $T_{C}$ is strongly
connected, thus $C$ is irreducible, and therefore it is a valid incomplete
PC matrix.

\begin{figure}[h]
\begin{centering}
\includegraphics[scale=0.5]{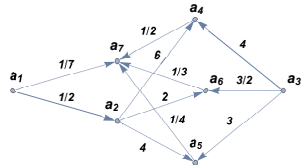}
\par\end{centering}
\caption{$T_{\text{UP}(C)}$ - the graph of the upper triangle of the PC matrix
$C$ which does not contain triads}
\label{fig:no-triad-graph-ex}
\end{figure}

Since we can not use triads in this case, the question arises as to
whether we should not use quadruples $(c_{ik},c_{kr},c_{rj},c_{ij})$
i.e. cycles with the length $4$. As in the previous case, the answer
is negative. We are able to construct a directed graph without cycles
with the length $4$ and, as follows, a PC matrix which does not contain
quadruples. Fortunately, there are some cycles in most strongly connected
graphs of the PC matrices. The only exceptions are the strongly connected
graphs with $n$ vertices and only $2(n-1)$ edges. In such graphs,
every vertex $a_{i}$ is connected with only one other vertex $a_{j}$
by two edges $(a_{i},a_{j})$ and $(a_{j},a_{i})$. If we remove any
pair of the existing edges $\{(a_{i},a_{j}),(a_{j},a_{i})\}$, the
graph would cease to be strongly connected. Conversely, if we add
a new pair of edges $\{(a_{p},a_{q}),(a_{q},a_{p})\}$ to $E$ it
would be a cycle in the graph. For this reason, whenever there are
cycles in $T_{C}$, we will try to use them all to inconsistency without
limiting their length or quantity. However, if there are no cycles
in the graph, the concept of inconsistency loses meaning. Therefore,
the only option is to accept that the considered PC matrix is consistent
(an alternative would be to assume that the inconsistency is indeterminate).

In order to use the cycle to determine the matrix inconsistency, let
us extend the Definition \ref{def:A-PC-matrix}.
\begin{defn}
\label{def:A-PC-matrix-extended}A PC matrix $C=[c_{ij}]$ is said
to be inconsistent if there exists a cycle $a_{i_{1}},\ldots,a_{i_{m}}$
in $T_{C}$ such that $c_{i_{1}i_{2}}c_{i_{2}i_{3}}\cdot\ldots\cdot c_{i_{m-1}i_{m}}\neq c_{i_{1}i_{m}}$.
Otherwise $C$ is consistent.
\end{defn}
For complete PC matrices, both definitions \ref{def:A-PC-matrix}
and \ref{def:A-PC-matrix-extended} are equivalent. To prove that,
it is enough to show that wherever $C$ is inconsistent in the sense
(Def. \ref{def:A-PC-matrix}) then it is also inconsistent in the
sense (Def. \ref{def:A-PC-matrix-extended}), and reversely inconsistency
in the sense (Def. \ref{def:A-PC-matrix-extended}) entails inconsistency
in the sense (Def. \ref{def:A-PC-matrix}).
\begin{thm}
\label{thm:Inc-equiv-theorem}Every complete PC matrix $C$ is inconsistent
in the sense of (Def. \ref{def:A-PC-matrix}) if and only if it is
inconsistent in the sense of (Def. \ref{def:A-PC-matrix-extended})
\end{thm}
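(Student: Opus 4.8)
The plan is to prove the two implications separately, the first being essentially immediate and the second by a telescoping induction on path length.

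\emph{From Def.~\ref{def:A-PC-matrix} to Def.~\ref{def:A-PC-matrix-extended}.} Suppose $C$ is inconsistent in the sense of Def.~\ref{def:A-PC-matrix}, so there is a triad $(c_{ik},c_{kj},c_{ij})$ with pairwise distinct indices $i,j,k$ and $c_{ik}c_{kj}\neq c_{ij}$. Since $C$ is complete, all edges among $a_i$, $a_k$, $a_j$ are present, so $a_i,a_k,a_j$ is a simple cycle of length $3$ in $T_C$ (Def.~\ref{def:cycle-def}), and its defining inequality is precisely $c_{ik}c_{kj}\neq c_{ij}$. Hence $C$ is inconsistent in the sense of Def.~\ref{def:A-PC-matrix-extended}. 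This direction requires no real work: a triad is just a length-$3$ cycle.

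\emph{From Def.~\ref{def:A-PC-matrix-extended} to Def.~\ref{def:A-PC-matrix}.} I would argue by contraposition: assume $C$ is consistent in the sense of Def.~\ref{def:A-PC-matrix}, i.e. $c_{ik}c_{kj}=c_{ij}$ for every triad, and show every simple cycle $a_{i_1},\ldots,a_{i_m}$ satisfies $c_{i_1 i_2}c_{i_2 i_3}\cdots c_{i_{m-1}i_m}=c_{i_1 i_m}$, which is exactly consistency in the sense of Def.~\ref{def:A-PC-matrix-extended}. The key step is a path-telescoping lemma proved by induction on $m\ge 2$: \emph{for every path $a_{i_1},\ldots,a_{i_m}$ in $T_C$ one has $c_{i_1 i_2}\cdots c_{i_{m-1}i_m}=c_{i_1 i_m}$}. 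The case $m=2$ is a tautology and $m=3$ is the triad hypothesis. For the step, given a path $a_{i_1},\ldots,a_{i_m},a_{i_{m+1}}$ on distinct vertices (Def.~\ref{def:cycle-def-1}), the induction hypothesis gives $c_{i_1 i_2}\cdots c_{i_{m-1}i_m}=c_{i_1 i_m}$; since $i_1,i_m,i_{m+1}$ are pairwise distinct and $C$ is complete, the triad $(c_{i_1 i_m},c_{i_m i_{m+1}},c_{i_1 i_{m+1}})$ is available and yields $c_{i_1 i_m}c_{i_m i_{m+1}}=c_{i_1 i_{m+1}}$; multiplying completes the step. Applying the lemma to the underlying path of any simple cycle (whose closing edge $(a_{i_m},a_{i_1})$ exists by definition) gives exactly $c_{i_1 i_2}\cdots c_{i_{m-1}i_m}=c_{i_1 i_m}$, as required. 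An equivalent route is to first recall the observation recorded in Section~\ref{subsec:Inconsistency} that a consistent complete PC matrix has the form $c_{ij}=w(a_i)/w(a_j)$ (take $w(a_i)=c_{i1}$ and use the triad $(c_{i1},c_{1j},c_{ij})$ with reciprocity), after which the cycle product telescopes directly to $w(a_{i_1})/w(a_{i_m})=c_{i_1 i_m}$.

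I do not expect a genuine obstacle here; the points that need care are bookkeeping ones. First, in each invoked triad the three indices must be genuinely pairwise distinct, which is guaranteed because paths and cycles have distinct vertices. Second — and this is the conceptual heart of why the statement is restricted to complete matrices — completeness is exactly what makes every such triad, and the closing edge of every cycle, present; the argument does not survive verbatim for incomplete $C$, which is precisely the motivation for introducing Def.~\ref{def:A-PC-matrix-extended} in the first place.
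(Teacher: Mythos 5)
Your proof is correct and is essentially the paper's own argument: the forward direction is the same observation that a triad is a length-$3$ cycle, and your path-telescoping induction uses exactly the same sequence of triads $(i_1,i_2,i_3),(i_1,i_3,i_4),\ldots,(i_1,i_{m-1},i_m)$ that the paper uses, merely packaged as contraposition plus induction rather than as the paper's proof by contradiction with iterated substitution. No substantive difference.
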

\begin{proof}
``$\Rightarrow$'' Let $C$ be inconsistent in the sense of (Def.
\ref{def:A-PC-matrix}) i.e. there is a triad such that $c_{ik}c_{kj}\neq c_{ij}$.
Since this triad is also a cycle with the length $3$, $C$ is also
inconsistent in the sense of (Def. \ref{def:A-PC-matrix-extended}).

``$\Leftarrow$'' Let $C$ be inconsistent in the sense of (Def.
\ref{def:A-PC-matrix-extended}) i.e. there is a cycle $s=a_{i_{1}},\ldots,a_{i_{m}}$
such that $c_{i_{1}i_{2}}c_{i_{2}i_{3}}\cdot\ldots\cdot c_{i_{m-1}i_{m}}\neq c_{i_{1}i_{m}}$,

and let us suppose for a moment that $C$ is consistent in the sense
of (Def. \ref{def:A-PC-matrix}). The latter assumption means that
every triad is consistent, thus, in particular it also holds that
$c_{i_{1}i_{2}}c_{i_{2}i_{3}}=c_{i_{1}i_{3}}$. Therefore, the first
assumption can be written in the form: $c_{i_{1}i_{3}}\cdot\ldots\cdot c_{i_{m-1}i_{m}}\neq c_{i_{1}i_{m}}$.
Applying the same reasoning many times, we subsequently get that $c_{i_{1}i_{4}}\cdot\ldots\cdot c_{i_{m-1}i_{m}}\neq c_{i_{1}i_{m}}$,
$c_{i_{1}i_{5}}\cdot\ldots\cdot c_{i_{m-1}i_{m}}\neq c_{i_{1}i_{m}}$
and finally $c_{i_{1}i_{m-1}}\cdot c_{i_{m-1}i_{m}}\neq c_{i_{1}i_{m}}$.
However, as we assume that every triad is consistent, therefore also
$a_{i_{1}},a_{i_{m-1}},a_{m}$ is consistent, thus it must hold that
$c_{i_{1}i_{m-1}}\cdot c_{i_{m-1}i_{m}}=c_{i_{1}i_{m}}$. Contradiction.
\end{proof}
Of course, when $C$ is incomplete the two above definitions of inconsistency
are not equivalent. In particular, there may be PC matrices which
do not have any triads, hence they have to be considered as consistent,
but the same matrices may have graphs with cycles longer than $4$
that might be inconsistent. In general, however, the cycle based definition
of inconsistency is more general than the Def. \ref{def:A-PC-matrix}.
We may observe the following property.
\begin{rem}
Every PC matrix (complete and incomplete) inconsistent in the sense
(Def. \ref{def:A-PC-matrix}) is also inconsistent in the sense of
(Def. \ref{def:A-PC-matrix-extended}), but not reversely.

Definition \ref{def:A-PC-matrix-extended} also allows us to quantify
the inconsistency. As we will see later on, the ratio:

\begin{equation}
R_{s}\overset{\textit{df}}{=}\frac{c_{i_{1}i_{2}}\cdot\ldots\cdot c_{i_{m-1}i_{m}}}{c_{i_{1}i_{m}}}\label{eq:cycle-ratio}
\end{equation}

defined for a cycle $s=a_{i_{1}},\ldots,a_{i_{m}}$ is a useful way\footnote{Note that when $C$ is reciprocal then $R_{s}$ does not depend on
the choice of $m$. Indeed: 
\[
R_{s}=\frac{c_{i_{1}i_{2}}c_{i_{2}i_{3}}\cdot\ldots\cdot c_{i_{m-1}i_{m}}}{c_{i_{1}i_{m}}}=\frac{\frac{1}{c_{i_{2}i_{1}}}c_{i_{2}i_{3}}\cdot\ldots\cdot c_{i_{m-1}i_{m}}}{\frac{1}{c_{i_{m}i_{1}}}}=\frac{c_{i_{2}i_{3}}\cdot\ldots\cdot c_{i_{m-1}i_{m}}c_{i_{m}i_{1}}}{c_{i_{2}i_{1}}}=\ldots
\]
} for measuring inconsistency within the set of $m$ alternatives $a_{i_{1}},\ldots,a_{i_{m}}$.
We use this fact to define several matrix based indices for an incomplete
PC matrix. The idea of using cycles for inconsistency measurement
can be found in \citep{Bozoki2019tlls,Koczkodaj2013oaoi}.
\end{rem}

\subsection{Ranking based indices\label{subsec:Ranking-based-indices}}

The ranking based indices need the results of ranking in order to
calculate inconsistency. The considered indices use two different
priority deriving methods: EVM (\ref{eq:evm-method-1}) and GMM (\ref{eq:evm-method-2}).
Although both methods have been defined for a complete PC matrix,
they have their counterparts for incomplete PC matrices (Section \ref{subsec:Incompleteness}).
These are the HM \citep{Harker1987ipci} and ILLS approaches \citep{Bozoki2010ooco}.

The starting point of both extensions is the assumption that every
missing value $c_{ij}$ in $C$ should eventually take the value $w(a_{i})/w(a_{j})$.
Therefore, the authors of extensions replaced the unknown values by
the appropriate ratios, and then tried to solve such modified problems.
Let $\widehat{C}$ be a PC matrix obtained from $C$ by replacing
every $c_{ij}=?$ by $w(a_{i})/w(a_{j})$. In HM, the eigenvalue equation
(\ref{eq:evm-method-1}) takes the form: 
\[
\widehat{C}w=\lambda_{\textit{max}}w,
\]

and after the appropriate transformations, we finally get 
\[
Bw=\lambda_{\textit{max}}w
\]

where $B$ is an auxiliary matrix (\ref{eq:harker-matrix-form}).
Similarly, in the ILLS method \citep{Bozoki2010ooco}, the authors
define the priority of the i-th alternative as the geometric mean
of the i-th row of $\widehat{C}$. The adoption of this assumption
leads to the matrix equation (\ref{eq:bfr-log-equation}), whose solution
determines the desirable vector of priorities.

In general, the ranking based indices define the inconsistency as
the differences between $c_{ij}$ and $w(a_{i})/w(a_{j})$ for $i,j=1,\ldots,n$.
Since both HM and ILLS replace every $c_{ij}=?$ by the corresponding
ratio $w(a_{i})/w(a_{j})$, then the missing judgments do not contribute
to the inconsistency, but are considered as perfectly consistent.
From a practical point of view, during construction of the ranking
based indices for incomplete PC matrices, we can either ignore the
missing values or just assume that $c_{ij}=?$ equals $w(a_{i})/w(a_{j})$.

\section{Matrix based indices for incomplete PC matrices\label{sec:Matrix-based-indices}}

\subsection{Koczkodaj index}

The Koczkodaj index is directly based on the concept of a triad and
its inconsistency. Thus, as explained above (Sec. \ref{subsec:Matrix-based-indices}),
a triad's inconsistency has to be replaced by the cycle's inconsistency.
Let $C$ be an irreducible and incomplete PC matrix (Def. \ref{def:an-incomplete-pc-matrix-def})
and $T_{C}$ be a graph of $C$ (Def. \ref{def:graph-of-a-matrix}).
Then let us define the inconsistency of a single cycle\footnote{It is worth noting that if $s=a_{i}a_{k}a_{j}$ then $K_{s}=K_{i,k,j}$
(\ref{eq:koczkodaj-triad-index}, \ref{eq:koczkodaj-cycle-inconsistency}).} longer than\footnote{Cycles with the length $2$ are always consistent as $c_{ij}c_{ji}/c_{ii}=1$,
thus they are not relevant from the point of inconsistency of $C$.} $2$ i.e. $s\in\mathcal{S}_{C,2}$ as

\begin{equation}
K_{s}\overset{\textit{df}}{=}\min\left\{ \left|1-R_{s}\right|,\left|1-R_{s}^{-1}\right|\right\} \label{eq:koczkodaj-cycle-inconsistency}
\end{equation}
Then the Koczkodaj index for the incomplete PC matrix $C$ can be
defined as: 
\[
\widetilde{K}\overset{\textit{df}}{=}\begin{cases}
\max\left\{ K_{s}:s\in\mathcal{S}_{C,2}\right\}  & \left|\mathcal{S}_{C,2}\right|>0\\
0 & \left|\mathcal{S}_{C,2}\right|=0
\end{cases}
\]
The case in which $\left|\mathcal{S}_{C,2}\right|=0$ refers to the
situation when the $n\times n$ matrix $C$ is irreducible, but it
contains exactly $n-1$ comparisons i.e. $T_{\textit{UP}(C)}$ is
a tree \citep{Cormen2009ita}.

\subsection{Triad based average inconsistency indices}

The method of replacing triads with cycles can be successfully used
in the case of triad based average inconsistency indices. Thus, providing
that $C$ is an irreducible and incomplete PC matrix, we have: 
\[
\widetilde{I}_{1}\overset{\textit{df}}{=}\begin{cases}
\frac{\sum_{s\in\mathcal{S}_{C,2}}K_{s}}{\left|\mathcal{S}_{C,2}\right|} & \left|\mathcal{S}_{C,2}\right|>0\\
0 & \left|\mathcal{S}_{C,2}\right|=0
\end{cases},
\]
and correspondingly,

\[
\widetilde{I}_{2}\overset{\textit{df}}{=}\begin{cases}
\frac{\sqrt{\sum_{s\in\mathcal{S}_{C,2}}K_{s}^{2}}}{\left|\mathcal{S}_{C,2}\right|} & \left|\mathcal{S}_{C,2}\right|>0\\
0 & \left|\mathcal{S}_{C,2}\right|=0
\end{cases}.
\]
The $I_{\alpha}$ and $I_{\alpha,\beta}$ indices (\ref{eq:Ia}, \ref{eq:Iab})
also need to be changed accordingly.

\[
\widetilde{I}_{\alpha}\overset{\textit{df}}{=}\alpha\widetilde{K}+(1-\alpha)\widetilde{I}_{1},
\]

\[
\widetilde{I}_{\alpha,\beta}\overset{\textit{df}}{=}\alpha\widetilde{K}+\beta\widetilde{I}_{1}+(1-\alpha-\beta)\widetilde{I}_{2}.
\]

\subsection{Salo and Hamalainen index}

The $\textit{SHI}$ index is based on the observation that every product
$c_{ik}c_{kj}$ for any $k=1,\ldots,n$ is an approximation of $c_{ij}$
\citep{Salo1995ppta}.  When $C$ is irreducible, due to the strong
connectivity of $T_{C}$ between every two vertices $a_{q}$ and $a_{j}$
there is a path $p=a_{q},a_{i_{2}},a_{i_{2}},\ldots,a_{i_{m-1}},a_{j}$
(Def. \ref{def:cycle-def}). This means that $c_{q,i_{2}},c_{i_{2}i_{3},}\ldots,c_{i_{m-1},j}$
are defined. Let us denote the product induced by $p$ as $\pi_{p}=c_{q,i_{2}}c_{i_{2}i_{3},}\cdot\ldots\cdot c_{i_{m-1},j}$.
Due to (\ref{eq:inc-eq-1} - \ref{eq:inc-eq-2}), $\pi_{p}$ is also
a good approximation of $c_{ij}$. Thus, let us define 
\[
\underset{\widetilde{\,\,\,\,\,}}{r}{}_{ij}\overset{\textit{df}}{=}\min\left\{ \pi_{p}\mid p\in\mathcal{P}_{C,i,j}\right\} 
\]
and 
\[
\widetilde{r}_{ij}\overset{\textit{df}}{=}\max\left\{ \pi_{p}\mid p\in\mathcal{P}_{C,i,j}\right\} 
\]
The modified $I_{\textit{SH}}$ index can be defined as 
\[
\widetilde{I}_{\textit{SH}}\overset{\textit{df}}{=}\frac{2}{n(n-1)}\sum_{i=1}^{n-1}\sum_{j=i+1}^{n}\frac{\widetilde{r}_{ij}-\underset{\widetilde{\,\,\,\,\,}}{r}{}_{ij}}{\left(1+\widetilde{r}_{ij}\right)\left(1+\underset{\widetilde{\,\,\,\,\,}}{r}{}_{ij}\right)}.
\]

\section{Ranking based indices for incomplete PC matrices\label{sec:Ranking-based-indices}}

\subsection{Geometric consistency index}

The geometric consistency index (\ref{eq:GCI}) is directly based
on observation (\ref{eq:inc-eq-1}) i.e. $w(a_{i})/w(a_{k})\approx c_{ik}$,
where $w$ is the priority vector obtained from GMM \citep{Crawford1985taos,Aguaron2003tgci}.
In an incomplete PC matrix, a priority vector $\widetilde{w}$ has
to be computed using the ILLS method (Sec. \ref{subsec:Incompleteness}).
Following this method  wherever $c_{ij}=?$  it is replaced by $\widetilde{w}(a_{i})/\widetilde{w}(a_{j})$.
 This leads to the following formula: 
\begin{equation}
\widetilde{I}_{G1}\overset{\textit{df}}{=}\frac{2}{(n-1)(n-2)}\sum_{e\in\mathscr{E}}\log^{2}e,\label{eq:gci-1-expr}
\end{equation}
where $C=[c_{ij}]$, $\widetilde{w}=[\widetilde{w}(a_{1}),\ldots,\widetilde{w}(a_{n})]^{T}$
is the ranking vector calculated using the ILLS method, and $\mathscr{E}\overset{\textit{df}}{=}\left\{ e_{ij}=c_{ij}\frac{\widetilde{w}(a_{j})}{\widetilde{w}(a_{i})}\,:\,c_{ij}\neq?\,\,\text{and}\,\,i<j\right\} $.
The second way to extend the $\textit{GCI}$ index is to calculate
the average of all non-zero $\log^{2}e$ expressions also possible.
In this approach we assume that $c_{ij}=?$ does not contribute to
our knowledge about inconsistency.  Hence, the second version of
GCI for incomplete PC matrices is as follows:
\[
\widetilde{I}_{G2}\overset{\textit{df}}{=}\frac{1}{\left|\mathscr{E}\right|}\sum_{e\in\mathscr{E}}\log^{2}e.
\]

\subsection{Golden-Wang index}

The Golden-Wang index is based on the observation that every column
of a consistent PC matrix equals the ranking vector multiplied by
some constant scaling factor \citep{Golden1989aamo}.  Thus, after
scaling every column so that it sums up to $1$, it holds that $c_{ij}^{*}=w(a_{i})$,
where $C^{*}=[c_{ij}^{*}]$ is the consistent PC matrix with the rescaled
columns.  The difference between $c_{ij}^{*}$ and $w(a_{i})$ is
higher when the inconsistency is greater.

Despite the fact that the observations remain true for any priority
deriving method, the authors recommend using GMM (\ref{eq:gmm-method}).
The relationship between $c_{ij}^{*}$ and $w(a_{i})$  also remains
valid in the case of an incomplete PC matrix, however, due to the
missing values the scaling procedure needs to be modified. Let us
consider the k-th column of the irreducible incomplete PC matrix $C$
and the ranking vector $w$. Let every element of $C$ be either $1$
or $?$. So it is easy to see that the ranking vector is composed
of the same $1/n$ values. For example, for a $5\times5$ matrix we
may have: 
\[
\left[\begin{array}{c}
c_{1k}\\
c_{2k}\\
c_{3k}\\
c_{4k}\\
c_{5k}
\end{array}\right]=\left[\begin{array}{c}
1\\
1\\
1\\
?\\
?
\end{array}\right],\,\,\,\,\left[\begin{array}{c}
w(a_{1})\\
w(a_{2})\\
w(a_{3})\\
w(a_{4})\\
w(a_{5})
\end{array}\right]=\left[\begin{array}{c}
1/5\\
1/5\\
1/5\\
1/5\\
1/5
\end{array}\right],
\]
where $k=1,2$ or $3$. Then, after scaling (so that the sum of elements
is one) the k-th column is: 
\[
\left[\begin{array}{c}
c_{1k}\\
c_{2k}\\
c_{3k}\\
c_{4k}\\
c_{5k}
\end{array}\right]=\left[\begin{array}{c}
1/3\\
1/3\\
1/3\\
?\\
?
\end{array}\right]
\]
as the undefined elements cannot be scaled. It is evident that $c_{ik}^{*}\neq w(a_{i})$
for $i=1,2,3$ as $1/3\neq1/5$. The solution is to construct the
priority vector $w_{k}$ which has the missing values at the same
positions as the k-th column. Let us consider:
\[
\left[\begin{array}{c}
c_{1k}\\
c_{2k}\\
c_{3k}\\
c_{4k}\\
c_{5k}
\end{array}\right]=\left[\begin{array}{c}
1\\
1\\
1\\
?\\
?
\end{array}\right],\,\,\,\,\left[\begin{array}{c}
w_{k}(a_{1})\\
w_{k}(a_{2})\\
w_{k}(a_{3})\\
w_{k}(a_{4})\\
w_{k}(a_{5})
\end{array}\right]=\left[\begin{array}{c}
1/5\\
1/5\\
1/5\\
?\\
?
\end{array}\right]
\]
In such a case, after scaling, indeed $c_{ik}^{*}=w_{k}^{*}(a_{i})=1/3$
for $i=1,2,3$.

The above observation shows the way in which the Golden-Wang index
can be extended to incomplete pairwise comparisons. Let us define
this extension more formally. For this purpose, let us assume that
$C$ is an irreducible, incomplete PC matrix, and $w$ is the ranking
vector calculated using the ILLS method. Then let $\Omega=[\omega_{ij}]$
be an $n\times n$ matrix such that 
\[
\omega_{ij}\overset{\textit{df}}{=}\begin{cases}
w(a_{i}) & \text{if}\,\,c_{ij}\neq?\\
? & \text{if}\,\,c_{ij}=?
\end{cases}.
\]
Next, let us scale every column in $\Omega$ and in $C$ so that all
the elements in the column sum up to one (in both cases, undefined
values are omitted, i.e. only defined elements are subject to scaling).
As a result, we get two matrices $C^{*}=[c_{ij}^{*}]$ and $\Omega^{*}=[\omega_{ij}^{*}]$
with the appropriately rescaled columns. The absolute differences
between the entries of these two matrices form the Golden-Wang index
for incomplete PC matrices. Thus, following (\ref{eq:gw-index}),
we may define: 
\[
\widetilde{\textit{GW}}\overset{\textit{df}}{=}\frac{1}{n}\sum_{i=1}^{n}\sum_{j=1}^{n}\mid c_{ij}^{*}-\omega_{ij}^{*}\mid
\]

\subsection{Relative Error}

  In general, Barzilai's Relative Error $\textit{RE}$ has been
defined for additive PC matrices \citep{Barzilai1998cmfp}. For the
purpose of this paper we use its logarithmized version suitable for
multiplicative matrices \citep{Barzilai1998cmfp}. For an incomplete
PC matrix, similarly to the case of the $\textit{GCI}$ index, we
may assume that $w$ is calculated using the ILLS method (for multiplicative
matrices Barzilai's original approach uses GMM). The use of the ILLS
method implies the assumption that every $c_{ij}=?$ can be substituted
by $w_{\textit{ILLS}}(a_{i})/w_{\textit{ILLS}}(a_{j})$. In particular,
in such a case $e_{ij}=0$, hence in the formula (\ref{eq:relative-error-log-form})
they can be omitted. Thus, the relative error for the incomplete multiplicative
PC matrix $C$ takes the form:

\[
\widetilde{\textit{RE}}_{1}\overset{\textit{df}}{=}\frac{\sum_{c_{ij}\neq?}\left[\log c_{ij}\frac{w_{\textit{ILLS}}(a_{j})}{w_{\textit{ILLS}}(a_{i})}\right]^{2}}{\sum_{c_{ij}\neq?}\log^{2}c_{ij}+\sum_{c_{ij}=?}\log^{2}\frac{w_{\textit{ILLS}}(a_{i})}{w_{\textit{ILLS}}(a_{j})}},
\]
where $i,j=1,\ldots,n$. Another way of extending Relative Error to
incomplete, multiplicative PC matrices is to skip all the expressions
requiring missing comparisons. Similarly to the case of the geometric
consistency index, this leads to a shorter formula:

\[
\widetilde{\textit{RE}}_{2}\overset{\textit{df}}{=}\frac{\sum_{c_{ij}\neq?}\left[\log c_{ij}\frac{w_{\textit{ILLS}}(a_{j})}{w_{\textit{ILLS}}(a_{i})}\right]^{2}}{\sum_{c_{ij}\neq?}\log^{2}c_{ij}}.
\]

\section{Numerical experiment\label{sec:Numerical-experiment}}

Does increasing incompleteness affect inconsistency? Intuition suggests
that it should not. If decision makers responsible for creating PC
matrices are inconsistent in their judgments, then we may assume that
their inconsistency will not depend on whether they answer all or
only part of the questions. The only difference is that in the case
of a complete matrix the experts will more often do both: make mistakes
and respond correctly. Of course, we implicitly assume that the experts
are able to consider each question with similar attention, i.e. questions
are not too many, experts have enough time to think about them, and
they are professionals in the field. So if indeed inconsistency does
not depend on incompleteness, then the incomplete matrix can be treated
just as a sample of some complete PC matrix. Of course, there may
always be some differences between the sample and the entire population,
but it is natural to expect that they are reasonably small.

In light of these observations, it seems interesting to see how much
the inconsistency for the complete matrix will differ from its incomplete
sample with reference to the given inconsistency index, i.e. how robust
the given inconsistency index is for the PC matrix deterioration.
To this end, we created $1000$ consistent $7\times7$ PC matrices.
Then, the entries of each matrix were disturbed by multiplying them
by the randomly chosen coefficient $\gamma\in[1/d,d]$. We repeated
the disturbance procedure $30$ times for $\gamma=1,\ldots,30$. In
this way, we received the set $\mathcal{C}$ composed of $30000$
PC matrices with varying degrees of inconsistency.

Every complete $7\times7$ PC matrix contains $21$ comparisons (entries
above the diagonal). On the other hand, the smallest irreducible $7\times7$
PC matrix has $6$ comparisons (as at least six edges are needed to
connect seven different vertices of a graph of a matrix). Hence, preserving
irreducibility, at most $15$ comparisons can be safely removed from
the complete $7\times7$ PC matrix. Therefore, for each of the $30000$
complete PC matrices, we prepared 15 randomly incomplete irreducible
PC matrices, so that every complete matrix had its ``sample'' matrix
with $1,2$ up to $15$ missing comparisons. Finally, we received
$480000$ complete and incomplete PC matrices for which we calculated
all $13$ inconsistency indices defined in Sections \ref{sec:Matrix-based-indices}
and \ref{sec:Ranking-based-indices}.

In order to check the robustness of different inconsistency indices,
we calculate the directed distance between the inconsistency of the
complete matrix and their incomplete counterparts. Let $I(C)-I(C_{k})$
be the ordered distance between where $I(C)$ means the value of the
inconsistency index $I$ calculated for a complete matrix $C\in\mathcal{C}$,
and $I(C_{k})$ denotes inconsistency of the matrix that was obtained
from $C$ by removing $k$ comparisons determined by using $I$. Of
course, different indices may take values from various ranges. Therefore,
to allow those indices to be compared with each other, the ordered
distance is divided by the higher component of each difference. Hence,
the rescaled ordered distance $\Delta_{I}(C,C_{k})$ between the inconsistency
of two matrices $C$ and $C_{k}$ is given as: 
\[
\Delta_{I}(C,C_{k})\overset{\textit{df}}{=}\begin{cases}
\frac{I(C)-I(C_{k})}{\max\left\{ I(C),I(C_{k})\right\} } & \max\left\{ I(C),I(C_{k})\right\} >0\\
0 & I(C)=I(C_{k})=0
\end{cases}.
\]
The above formula also takes into account the situation where $I(C)=I(C_{k})=0$.
In such a case, it is assumed that $\Delta(C,C_{k})=0$. The final
result is the average ordered distance: 
\begin{equation}
D(I,k)\overset{\textit{df}}{=}\frac{1}{\left|\mathcal{C}\right|}\sum_{C\in\mathcal{C}}\Delta_{I}(C,C_{k}).\label{eq:average-ordered-distance}
\end{equation}
The subsequent values $D(I,0),D(I,1),\ldots,D(I,15)$ allow the difference
between the inconsistency of the complete and incomplete matrix to
be assessed with respect to the given index $I$ and the number of
missing comparisons $k$.

In an ideal case, $D(I,k)$ should be $0$ for all inconsistency indices
and every possible $k$. In practice, of course, it is impossible
as not all comparisons in the PC matrix are inconsistent to the same
extent. Therefore, it is possible that an incomplete matrix will be
less (or more) consistent than its complete counterpart. If the incomplete
matrix $C_{k}$ is less inconsistent than the complete matrix $C$,
i.e. $I(C)>I(C_{k})$, then $\Delta_{I}(C,C_{k})>0$. Reversely, if
the incomplete matrix is more inconsistent than its consistent predecessor,
the distance is negative i.e. $\Delta_{I}(C,C_{k})<0.$ In other words,
the sign (direction) of a distance $D_{I}$ informs us if there is
a greater complete or incomplete PC matrix. The closer $\Delta_{I}(C,C_{k})$
is to $0$, i.e. the smaller $\left|\Delta_{I}(C,C_{k})\right|$ is,
the more resistant to incompleteness is the index $I$. For the purpose
of this study, a directed distance for all inconsistency indices has
been computed using $30000$ matrices, then the results have been
averaged as $D(I,k)$. Any particular value of $D(I,k)$ for some
fixed $I$ and $k$ can be interpreted as an average value of directed
distance for a $7\times7$ randomly disturbed matrix where $I$ is
an inconsistency index, and $k$ is the number of missing elements.
Of course: the index is better (more robust) when it is closer to
the abscissa\footnote{When assessing the robustness of $I$ it is not important whether
$\Delta_{I}(C,C_{k})$ takes positive or negative values. How far
$\Delta_{I}(C,C_{k})$ is from the abscissa is more important, i.e.
the size of $\left|\Delta_{I}(C,C_{k})\right|$}. Of course it is possible that for some particular $Q,R,k$ the distance
$\left|D(Q,k)\right|$ is greater than $\left|D(R,k)\right|$. However,
for $k+1$ it may turn out that $\left|D(Q,k+1)\right|<\left|D(R,k+1)\right|$.
In such a case, it is difficult to indicate the winner, as one time
$Q$ is better, the other time $R$ is better. Therefore, as the final
measure of index robustness, we suggest taking the area between the
abscissa and the plot of its rescaled ordered distance (Fig. \ref{fig:directed-distance-ind-chart1}).
The discrete counterpart of the size of this area is the sum of the
absolute value of subsequent $D(I,k)$ i.e.

\[
\mathscr{D}(I)\overset{\textit{df}}{=}\sum_{k=0}^{15}\left|D(I,k)\right|.
\]
Of course, the smaller $\mathscr{D}(I)$ the better. 
\begin{figure}[h]
\begin{centering}
\includegraphics[scale=0.5]{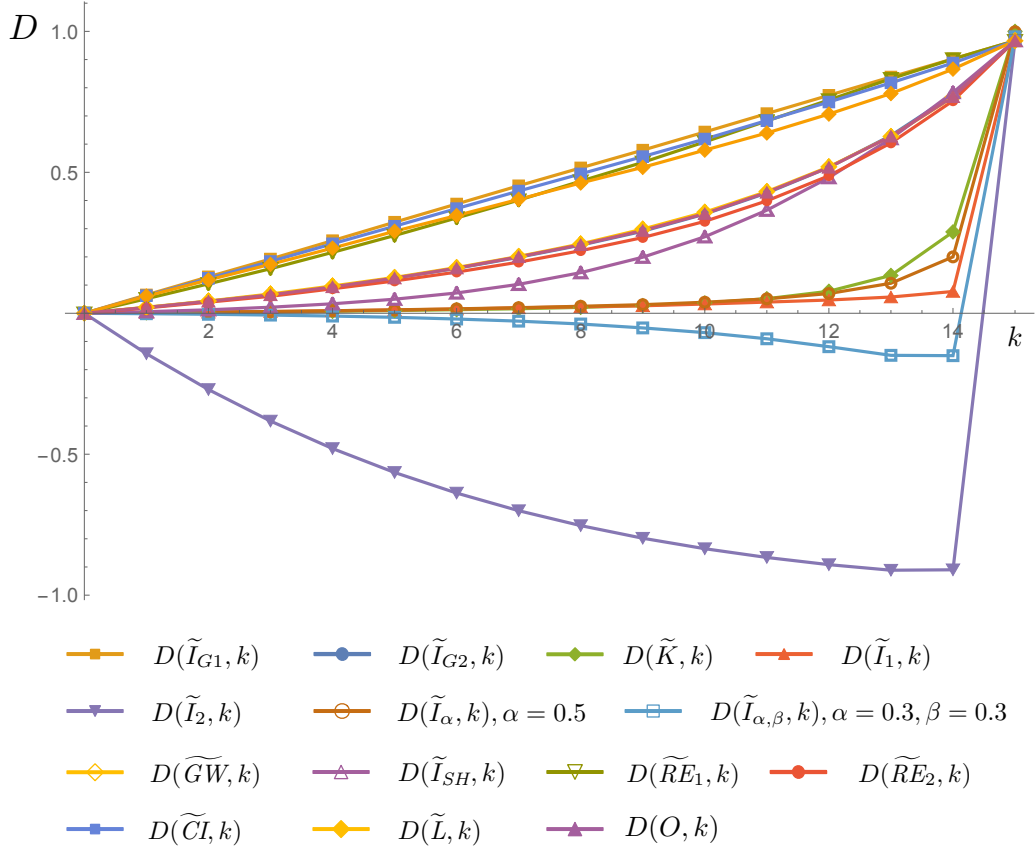}
\par\end{centering}
\caption{Rescaled ordered distance for different inconsistency indices for
incomplete PC matrices with $k$ missing comparisons.}

\label{fig:directed-distance-ind-chart1}
\end{figure}

In Figure \ref{fig:directed-distance-ind-chart1} there are fourteen
plots\footnote{The exact numerical data are presented in the Appendix in Table \ref{tab:exact-data-table}.}
corresponding to the average ordered distance (\ref{eq:average-ordered-distance})
for all the inconsistency indices introduced in \ref{sec:Matrix-based-indices}
and \ref{sec:Ranking-based-indices} and three additional indices
found in the literature. It is easy to see that, in general, the matrix
based indices perform better than the ranking based indices. The exception
here is the index $\widetilde{I}_{2}$, which very quickly reveals
high differences between complete and incomplete PC matrices. It is
interesting that the incomplete matrices are considered by this index
as more inconsistent than the complete ones (most of the plot is below
the abscissa). The behavior of $\widetilde{I}_{2}$ is inherited by
the $\widetilde{I}_{\alpha,\beta}$ index. Here, one can also notice
that incomplete matrices are considered as more inconsistent than
their complete counterparts. Fortunately, the other matrix based indices
perform very well. The best of them is $\widetilde{I}_{1}$, where
$\mathscr{D}(\widetilde{I}_{1})=1.4256$. Then $\mathscr{D}(\widetilde{I}_{\alpha})=1.5961$,
and next $\mathscr{D}(\widetilde{I}_{\alpha,\beta})=1.6882$. Among
the ranking based indices, the modification of Salo-Hamalainen index
$\widetilde{I}_{\textit{SH}}$ stands out positively as it gets $\mathscr{D}(\widetilde{I}_{\textit{SH}})=4.1286$.
The other ranking based indices, like modified Barzilai's relative
error index version 1, get higher areas under the plot. Thus, incompleteness
influences the assessment of the degree of inconsistency to a greater
extent than for previous indices. All the values of $\mathscr{D}$
are shown in Table \ref{tab:d-res}.

\begin{table}
\setlength\extrarowheight{3pt}
\setlength{\tabcolsep}{2pt}
\begin{centering}
\begin{tabular}{|c|c|c|c|}
\hline 
Pos. & Notation & Name & $\mathscr{D}(I)$\tabularnewline
\hline 
\hline 
$1$. & $\widetilde{I}_{1}$ & Cycle based index v. I & $1.3368$\tabularnewline
\hline 
$2$. & $\widetilde{I}_{\alpha}$ & $\alpha$-index, for $\alpha=0.5$ & $1.5896$\tabularnewline
\hline 
$3$. & $\widetilde{K}$ & Koczkodaj index & $1.692$\tabularnewline
\hline 
$4$. & $\widetilde{I}_{\alpha,\beta}$ & $\alpha,\beta$-index, for $\alpha=\beta=0.3$ & $1.7321$\tabularnewline
\hline 
$5$. & $\widetilde{I}_{\textit{SH}}$ & Salo-Hamalainen index & $4.1286$\tabularnewline
\hline 
$6$. & $\widetilde{\textit{RE}}_{2}$ & Barzilai's relative error index v. II & $4.6748$\tabularnewline
\hline 
$7$. & $\widetilde{I}_{G2}$ & Geometric consistency index v. II & $4.9048$\tabularnewline
\hline 
$8.$ & $O$ & Oliva-Setola-Scala's index & $4.9062$\tabularnewline
\hline 
$9$. & $\widetilde{GW}$ & Golden-Wang index & $4.9359$\tabularnewline
\hline 
$10.$ & $\widetilde{L}$ & Logarithmic least square condition & $7.1425$\tabularnewline
\hline 
$11$. & $\widetilde{\textit{RE}}_{1}$ & Barzilai's relative error index v. I & $7.2958$\tabularnewline
\hline 
$12$. & $\widetilde{\textit{CI}}$ & Saaty consistency index & $7.5028$\tabularnewline
\hline 
$13$. & $\widetilde{I}_{G1}$ & Geometric consistency index v. I & $7.7333$\tabularnewline
\hline 
$14$. & $\widetilde{I}_{2}$ & Cycle based index v. II & $10.112$\tabularnewline
\hline 
\end{tabular}
\par\end{centering}
\caption{The total distance $\mathscr{D}$ from the abscissae of the $D(I,k)$
plots for all considered indices. The smaller the value, the more
robust the given inconsistency index.}

\label{tab:d-res}
\end{table}

\section{Discussion and summary\label{sec:Discussion}\label{sec:Summary}}

The four best indices (Table \ref{tab:d-res}) achieve very similar
results. They are all very good, which means that the differences
in inconsistency measured by these indices between complete and incomplete
matrices is small. For instance, $D(\widetilde{I}_{1},4)=0.009$ and
$D(\widetilde{I}_{1},11)=0.045$, which means that for $20\%$ of
missing comparisons we may expect a difference in value of the index
smaller than $1\%$, and for $50\%$ of missing comparisons this difference
should not be greater than $4.5\%-5\%$. Such results clearly show
that inconsistency measurement for incomplete PC matrices can indeed
be a valuable indication of the quality of decision data. Thus, methods
for calculating the ranking for incomplete PC matrices mentioned in
Section \ref{subsec:Incompleteness} also gain methods for estimating
data inconsistency.

An obvious disadvantage of the matrix based indices defined in Section
\ref{sec:Matrix-based-indices} is the need to find all cycles in
the matrix graph. This can be particularly difficult and time-consuming
for larger matrices. A way to deal with a large number of cycles may
be to limit their number. This might be achieved by limiting the analysis
of inconsistency to fundamental cycles only, or just to a random set
of cycles.  A similar problem does not occur in the case of the ranking
based indices. The best of them, the modification of Salo-Hamalainen
index for incomplete PC matrices, gets the total distance $\text{\ensuremath{\mathscr{D}(\widetilde{I}_{\textit{SH}})=}}4.1286$.
It is also a good result, which proves that this index can be effectively
used to assess the inconsistency of incomplete PC matrices.   

The article presents extensions for twelve inconsistency indices that
allow them to also be used for incomplete PC matrices. Thanks to this,
users of the pairwise comparison method (including AHP) receive a
way to determine the quality of incomplete decision data . The presented
research does not determine which of the defined indices is the best
in practice. Robust indices can be difficult to implement and calculate.
On the other hand, indices that are easier to calculate can be more
vulnerable for decision data deterioration. Finding a solution that
combines robustness with the simplicity of implementation and calculation
will still be a challenge for researchers.

\section*{Acknowledgment}

The authors would like to show their gratitude to José María Moreno-Jiménez
(Universidad de Zaragoza, Spain), Sándor Bozóki (Hungarian Academy
of Sciences and Corvinus University of Budapest, Hungary) for their
comments on the early version of the paper. Special thanks are due to Ian Corkill for his editorial help.

\section*{Disclosure statement}

No potential conflict of interest was reported by the authors.

\section*{Funding}

The research is supported by The National Science Centre (Narodowe
Centrum Nauki), Poland, project no. 2017/25/B/HS4/01617.

\bibliographystyle{plain}
\bibliography{papers_biblio_reviewed}

\begin{thebibliography}{10}

\bibitem{Aguaron2003tgci}
J.~Aguar{\'o}n and J.~M. Moreno-Jim{\'e}nez.
\newblock {The geometric consistency index: Approximated thresholds}.
\newblock {\em European Journal of Operational Research}, 147(1):137 -- 145,
  2003.

\bibitem{Arrow1950adit}
K.~J. Arrow.
\newblock {A difficulty in the concept of social welfare}.
\newblock {\em The Journal of Political Economy}, 1950.

\bibitem{Barzilai1998cmfp}
J.~Barzilai.
\newblock Consistency measures for pairwise comparison matrices.
\newblock {\em Journal of Multi-Criteria Decision Analysis}, 7(3):123--132,
  1998.

\bibitem{Bozoki2010ooco}
S.~Boz{\'o}ki, J.~F{\"u}l{\"o}p, and L.~R{\'o}nyai.
\newblock On optimal completion of incomplete pairwise comparison matrices.
\newblock {\em Mathematical and Computer Modelling}, 52(1--2):318 -- 333, 2010.

\bibitem{Bozoki2019tlls}
S.~Boz{\'o}ki and V.~Tsyganok.
\newblock The (logarithmic) least squares optimality of the arithmetic
  (geometric) mean of weight vectors calculated from all spanning trees for
  incomplete additive (multiplicative) pairwise comparison matrices.
\newblock {\em International Journal of General Systems}, 48(4):362--381, 2019.

\bibitem{Brunelli2015itta}
M.~Brunelli.
\newblock {\em {Introduction to the Analytic Hierarchy Process}}.
\newblock SpringerBriefs in Operations Research. Springer International
  Publishing, Cham, 2015.

\bibitem{Brunelli2016atno}
M.~Brunelli.
\newblock {A technical note on two inconsistency indices for preference
  relations: A case of functional relation}.
\newblock {\em Information Sciences}, 357:1--5, August 2016.

\bibitem{Brunelli2016raoi}
M.~Brunelli.
\newblock {Recent Advances on Inconsistency Indices for Pairwise Comparisons -
  A Commentary}.
\newblock {\em Fundam. Inform.}, 144(3-4):321--332, 2016.

\bibitem{Brunelli2018aaoi}
M.~Brunelli.
\newblock {A survey of inconsistency indices for pairwise comparisons}.
\newblock {\em International Journal of General Systems}, 47(8):751--771,
  September 2018.

\bibitem{Brunelli2013iifp}
M.~Brunelli, L.~Canal, and M.~Fedrizzi.
\newblock {Inconsistency indices for pairwise comparison matrices: a numerical
  study}.
\newblock {\em Annals of Operations Research}, 211:493--509, February 2013.

\bibitem{Brunelli2013anot}
M.~Brunelli, A.~Critch, and M.~Fedrizzi.
\newblock {A note on the proportionality between some consistency indices in
  the AHP}.
\newblock {\em Applied Mathematics and Computation}, 219(14):7901--7906, March
  2013.

\bibitem{Brunelli2015apoi}
M.~Brunelli and M.~Fedrizzi.
\newblock Axiomatic properties of inconsistency indices for pairwise
  comparisons.
\newblock {\em Journal of the Operational Research Society}, 66(1):1--15, Jan
  2015.

\bibitem{Colomer2011rlfa}
J.~M. Colomer.
\newblock {Ramon Llull: from `Ars electionis' to social choice theory}.
\newblock {\em Social Choice and Welfare}, 40(2):317--328, October 2011.

\bibitem{Cormen2009ita}
T.~H. Cormen, C.~E. Leiserson, R.~L. Rivest, and C.~Stein.
\newblock {\em Introduction to Algorithms}.
\newblock MIT Press, 3rd edition, 2009.

\bibitem{Crawford1985taos}
G.~Crawford and C.~Williams.
\newblock {The Analysis of Subjective Judgment Matrices}.
\newblock Technical report, The Rand Corporation, 1985.

\bibitem{Crawford1987tgmp}
G.~B. Crawford.
\newblock The geometric mean procedure for estimating the scale of a judgement
  matrix.
\newblock {\em Mathematical Modelling}, 9(3--5):327 -- 334, 1987.

\bibitem{Crawford1985anot}
R.~Crawford and C.~Williams.
\newblock A note on the analysis of subjective judgement matrices.
\newblock {\em Journal of Mathematical Psychology}, 29:387 -- 405, 1985.

\bibitem{Dong2008acso}
Y.~Dong, Y.~Xu, H.~Li, and M.~Dai.
\newblock {A comparative study of the numerical scales and the prioritization
  methods in AHP}.
\newblock {\em European Journal of Operational Research}, 186(1):229--242,
  March 2008.

\bibitem{Duszak1994goan}
Z.~Duszak and W.~W. Koczkodaj.
\newblock Generalization of a new definition of consistency for pairwise
  comparisons.
\newblock {\em Information Processing Letters}, 52(5):273 -- 276, 1994.

\bibitem{Fechner1860edp}
G.~T. Fechner.
\newblock {\em {Elemente der Psychophysik}}.
\newblock Breitkopf und H{\"a}rtel, Leipzig, 1860.

\bibitem{Fedrizzi2007ipca}
M.~Fedrizzi and S.~Giove.
\newblock {Incomplete pairwise comparison and consistency optimization}.
\newblock {\em European Journal of Operational Research}, 183(1):303--313,
  2007.

\bibitem{Fedrizzi2013osii}
M.~Fedrizzi and S.~Giove.
\newblock {Optimal sequencing in incomplete pairwise comparisons for
  large-dimensional problems}.
\newblock {\em International Journal of General Systems}, 42(4):366--375,
  February 2013.

\bibitem{Forman1993fafa}
E.~H. Forman.
\newblock {Facts and fictions about the analytic hierarchy process}.
\newblock {\em Mathematical and Computer Modelling}, 17(4-5):19--26, February
  1993.

\bibitem{Franek2014jsac}
J.~Franek and A.~Kresta.
\newblock {Judgment Scales and Consistency Measure in AHP}.
\newblock {\em Procedia Economics and Finance}, 12:164--173, 2014.

\bibitem{Golden1989aamo}
B.~L. Golden and Q.~Wang.
\newblock {\em An Alternate Measure of Consistency}, pages 68--81.
\newblock Springer Berlin Heidelberg, Berlin, Heidelberg, 1989.

\bibitem{Grzybowski2016nroi}
A.~Z. Grzybowski.
\newblock {New results on inconsistency indices and their relationship with the
  quality of priority vector estimation}.
\newblock {\em Expert Systems with Applications}, 43:197--212, January 2016.

\bibitem{Harker1987amoq}
P.~T. Harker.
\newblock Alternative modes of questioning in the analytic hierarchy process.
\newblock {\em Mathematical Modelling}, 9(3):353 -- 360, 1987.

\bibitem{Harker1987ipci}
P.~T. Harker.
\newblock {Incomplete pairwise comparisons in the analytic hierarchy process}.
\newblock {\em Mathematical Modelling}, 9(11):837--848, 1987.

\bibitem{Ho2017tsot}
W.~Ho and X.~Ma.
\newblock {The state-of-the-art integrations and applications of the analytic
  hierarchy process}.
\newblock {\em European Journal of Operational Research}, 267:399--414, 2018.

\bibitem{Hovanov2008dwfg}
N.~V. Hovanov, J.~W. Kolari, and M.~V. Sokolov.
\newblock Deriving weights from general pairwise comparison matrices.
\newblock {\em Mathematical Social Sciences}, 55(2):205 -- 220, 2008.

\bibitem{Jablonsky2015aosp}
J.~Jablonsky.
\newblock {Analysis of selected prioritization methods in the analytic
  hierarchy process}.
\newblock {\em Journal of Physics: Conference Series}, 622(1):1--7, 2015.

\bibitem{Koczkodaj1993ando}
W.~W. Koczkodaj.
\newblock A new definition of consistency of pairwise comparisons.
\newblock {\em Math. Comput. Model.}, 18(7):79--84, October 1993.

\bibitem{Koczkodaj2013oaoi}
W.~W. Koczkodaj and R.~Szwarc.
\newblock On axiomatization of inconsistency indicators in pairwise
  comparisons.
\newblock {\em Fundamenta Informaticae}, 132(485-500), 2014.

\bibitem{Koczkodaj2018aoii}
W.~W. Koczkodaj and R.~Urban.
\newblock {Axiomatization of inconsistency indicators for pairwise
  comparisons}.
\newblock {\em International Journal of Approximate Reasoning}, 94:18--29,
  March 2018.

\bibitem{Koczkodaj2014oaoi}
W.W. Koczkodaj and R.~Szwarc.
\newblock On axiomatization of inconsistency indicators for pairwise
  comparisons.
\newblock {\em Fundamenta Informaticae}, 4(132):485--500, 2014.

\bibitem{Kou2014acmm}
G.~Kou and C.~Lin.
\newblock {A cosine maximization method for the priority vector derivation in
  AHP}.
\newblock {\em European Journal of Operational Research}, 235(1):225--232, May
  2014.

\bibitem{Kulakowski2015otpo}
K.~Ku{\l}akowski.
\newblock On the properties of the priority deriving procedure in the pairwise
  comparisons method.
\newblock {\em Fundamenta Informaticae}, 139(4):403 -- 419, July 2015.

\bibitem{Kulakowski2016note}
K.~Ku{\l}akowski.
\newblock Notes on the existence of a solution in the pairwise comparisons
  method using the heuristic rating estimation approach.
\newblock {\em Annals of Mathematics and Artificial Intelligence},
  77(1):105--121, 2016.

\bibitem{Kulakowski2018iito}
K.~Ku{\l}akowski.
\newblock Inconsistency in the ordinal pairwise comparisons method with and
  without ties.
\newblock {\em European Journal of Operational Research}, 270(1):314 -- 327,
  2018.

\bibitem{kulakowski2019otgm}
K.~Ku{\l}akowski.
\newblock On the geometric mean method for incomplete pairwise comparisons.
\newblock {\em CoRR}, abs/1905.04609, 2019.

\bibitem{Kulakowski2016srot}
K.~Ku{\l}akowski and A.~Kedzior.
\newblock {Some Remarks on the Mean-Based Prioritization Methods in AHP}.
\newblock In Ngoc-Thanh Nguyen, Lazaros Iliadis, Yannis Manolopoulos, and
  Bogdan Trawi{\'{n}}ski, editors, {\em Lecture Notes In Computer Science,
  Computational Collective Intelligence: 8th International Conference, ICCCI
  2016, Halkidiki, Greece, September 28-30, 2016. Proceedings, Part I}, pages
  434--443. Springer International Publishing, 2016.

\bibitem{Kulakowski2014tntb}
K.~Ku{\l}akowski and J.~Szybowski.
\newblock The new triad based inconsistency indices for pairwise comparisons.
\newblock {\em Procedia Computer Science}, 35(0):1132 -- 1137, 2014.

\bibitem{Kulakowski2019tqoi}
K.~Ku{\l}akowski, J.~Szybowski, and A.~Prusak.
\newblock {Towards quantification of incompleteness in the pairwise comparisons
  methods}.
\newblock {\em International Journal of Approximate Reasoning}, 115:221--234,
  October 2019.

\bibitem{Lundy2016tmeo}
M.~Lundy, S.~Siraj, and S.~Greco.
\newblock {The mathematical equivalence of the ``spanning tree'' and row
  geometric mean preference vectors and its implications for preference
  analysis}.
\newblock {\em European Journal of Operational Research}, pages 1--12,
  September 2016.

\bibitem{Merris1994lmog}
R.~Merris.
\newblock {Laplacian matrices of graphs: a survey}.
\newblock {\em Linear Algebra and its Applications}, 197-198:143--176, January
  1994.

\bibitem{Meyer2000maaa}
C.~Meyer.
\newblock {\em {Matrix Analysis and Applied Linear Algebra}}.
\newblock SIAM: Society for Industrial and Applied Mathematics, April 2000.

\bibitem{Oliva2017sada}
G.~Oliva, R.~Setola, and A.~Scala.
\newblock {Sparse and distributed Analytic Hierarchy Process}.
\newblock {\em Automatica}, 85:211--220, November 2017.

\bibitem{Pan2014arpb}
D.~Pan, X.~Liu, J.~Liu, and Y.~Deng.
\newblock {A Ranking Procedure by Incomplete Pairwise Comparisons Using
  Information Entropy and Dempster-Shafer Evidence Theory}.
\newblock {\em The Scientific World Journal}, pages 1--11, August 2014.

\bibitem{Quarteroni2000nm}
A.~Quarteroni, R.~Sacco, and F.~Saleri.
\newblock {\em Numerical mathematics}.
\newblock Springer Verlag, 2000.

\bibitem{Saaty1977asmf}
T.~L. Saaty.
\newblock A scaling method for priorities in hierarchical structures.
\newblock {\em Journal of Mathematical Psychology}, 15(3):234 -- 281, 1977.

\bibitem{Saaty2005taha}
T.~L. Saaty.
\newblock The analytic hierarchy and analytic network processes for the
  measurement of intangible criteria and for decision-making.
\newblock In {\em Multiple Criteria Decision Analysis: State of the Art
  Surveys}, volume~78 of {\em International Series in Operations Research and
  Management Science}, pages 345--405. Springer New York, 2005.

\bibitem{Saaty1998rbev}
T.~L. Saaty and G.~Hu.
\newblock {Ranking by eigenvector versus other methods in the analytic
  hierarchy process}.
\newblock {\em Applied Mathematics Letters}, 11(4):121--125, 1998.

\bibitem{Salo1995ppta}
A.~A. Salo and R.~P. H{\"a}m{\"a}l{\"a}inen.
\newblock Preference programming through approximate ratio comparisons.
\newblock {\em European Journal of Operational Research}, 82(3):458 -- 475,
  1995.

\bibitem{Sen1977scta}
Amartya~K Sen.
\newblock {Social Choice Theory: A Re-examination}.
\newblock {\em Econometrica}, 45(1):53--89, January 1977.

\bibitem{Siraj2012east}
S.~Siraj, L.~Mikhailov, and J.~A. Keane.
\newblock {Enumerating all spanning trees for pairwise comparisons}.
\newblock {\em Computers and Operations Research}, 39(2):191--199, February
  2012.

\bibitem{Srdjevic2014fltb}
B.~Srdjevic, Z.~Srdjevic, and B.~Blagojevic.
\newblock {First-Level Transitivity Rule Method for Filling in Incomplete
  Pair-Wise Comparison Matrices in the Analytic Hierarchy Process}.
\newblock {\em Applied Mathematics {\&} Information Sciences}, 8(2):459--467,
  2014.

\bibitem{Talaga2018ioip}
D.~Talaga.
\newblock Inconsistency of incomplete pairwise comparisons matrices.
\newblock Master's thesis, AGH University of Science and Technology, 2018.

\bibitem{Thurstone1927tmop}
L.~L. Thurstone.
\newblock {The Method of Paired Comparisons for Social Values}.
\newblock {\em Journal of Abnormal and Social Psychology}, pages 384--400,
  1927.

\bibitem{Wedley1993cpfi}
W.~C. Wedley.
\newblock {Consistency prediction for incomplete AHP matrices}.
\newblock {\em Mathematical and Computer Modelling}, 17(4-5):151--161, February
  1993.

\bibitem{Wind1980maot}
Y.~Wind and T.~L. Saaty.
\newblock {Marketing Applications of the Analytic Hierarchy Process}.
\newblock {\em Management Science}, 26(7):641--658, 1980.

\bibitem{Yuen2014cls}
K.~K.~F. Yuen.
\newblock {Compound Linguistic Scale}.
\newblock {\em Applied Soft Computing}, 21:38--56, August 2014.

\end{thebibliography}

\appendix

\section*{Appendix}

\begin{sidewaystable}
\setlength\extrarowheight{3pt}
\setlength{\tabcolsep}{1pt}%
\begin{tabular}{l|cccccccccccccc}
\rowcolor{gray1} $k$ & $D(\widetilde{I}_{G1},k)$ & $D(\widetilde{I}_{G2},k)$ & $D(\widetilde{K},k)$ & $D(\widetilde{I}_{1},k)$ & $D(\widetilde{I}_{2},k)$ & $D(\widetilde{I}_{\alpha},k)$ & $D(\widetilde{I}_{\alpha,\beta},k)$ & $D(\widetilde{GW},k)$ & $D(\widetilde{I}_{\textit{SH}},k)$ & $D(\widetilde{\textit{RE}}_{1},k)$ & $D(\widetilde{\textit{RE}}_{2},k)$ & $D(\widetilde{\textit{CI}},k)$ & $D(\widetilde{L},k)$ & $D(O,k)$\tabularnewline
\hline 
0 & 0.000 & 0.000 & 0.000 & 0.000 & 0.000 & 0.000 & 0.000 & 0.000 & 0.000 & 0.000 & 0.000 & 0.000 & 0.000 & 0.000\tabularnewline
\rowcolor{gray1}1 & 0.065 & 0.021 & 0.002 & 0.002 & -0.143 & 0.002 & -0.002 & 0.021 & 0.006 & 0.051 & 0.019 & 0.062 & 0.061 & 0.020\tabularnewline
2 & 0.129 & 0.043 & 0.003 & 0.004 & -0.271 & 0.004 & -0.004 & 0.043 & 0.012 & 0.104 & 0.039 & 0.124 & 0.118 & 0.042\tabularnewline
\rowcolor{gray1}3 & 0.192 & 0.066 & 0.005 & 0.006 & -0.382 & 0.006 & -0.006 & 0.068 & 0.021 & 0.158 & 0.060 & 0.184 & 0.174 & 0.065\tabularnewline
4 & 0.258 & 0.093 & 0.007 & 0.009 & -0.480 & 0.009 & -0.010 & 0.096 & 0.034 & 0.216 & 0.087 & 0.246 & 0.231 & 0.094\tabularnewline
\rowcolor{gray1}5 & 0.323 & 0.125 & 0.010 & 0.012 & -0.565 & 0.012 & -0.014 & 0.127 & 0.050 & 0.276 & 0.115 & 0.309 & 0.291 & 0.125\tabularnewline
6 & 0.388 & 0.159 & 0.013 & 0.015 & -0.638 & 0.016 & -0.020 & 0.162 & 0.072 & 0.338 & 0.146 & 0.371 & 0.347 & 0.160\tabularnewline
\rowcolor{gray1}7 & 0.452 & 0.199 & 0.016 & 0.019 & -0.700 & 0.020 & -0.028 & 0.201 & 0.103 & 0.401 & 0.181 & 0.434 & 0.404 & 0.200\tabularnewline
8 & 0.516 & 0.242 & 0.021 & 0.023 & -0.753 & 0.025 & -0.038 & 0.246 & 0.144 & 0.469 & 0.222 & 0.494 & 0.462 & 0.243\tabularnewline
\rowcolor{gray1}9 & 0.579 & 0.291 & 0.027 & 0.027 & -0.798 & 0.031 & -0.052 & 0.298 & 0.199 & 0.535 & 0.269 & 0.556 & 0.518 & 0.294\tabularnewline
10 & 0.643 & 0.352 & 0.037 & 0.033 & -0.835 & 0.039 & -0.069 & 0.358 & 0.271 & 0.608 & 0.326 & 0.618 & 0.579 & 0.353\tabularnewline
\rowcolor{gray1}11 & 0.708 & 0.427 & 0.052 & 0.040 & -0.866 & 0.051 & -0.090 & 0.431 & 0.366 & 0.682 & 0.399 & 0.684 & 0.640 & 0.429\tabularnewline
12 & 0.773 & 0.517 & 0.078 & 0.047 & -0.892 & 0.070 & -0.119 & 0.519 & 0.481 & 0.757 & 0.488 & 0.749 & 0.707 & 0.518\tabularnewline
\rowcolor{gray1}13 & 0.838 & 0.630 & 0.134 & 0.058 & -0.911 & 0.106 & -0.149 & 0.627 & 0.620 & 0.832 & 0.604 & 0.817 & 0.779 & 0.628\tabularnewline
14 & 0.902 & 0.774 & 0.289 & 0.077 & -0.910 & 0.200 & -0.150 & 0.766 & 0.783 & 0.902 & 0.754 & 0.888 & 0.866 & 0.768\tabularnewline
\rowcolor{gray1}15 & 0.967 & 0.967 & 1.000 & 0.967 & 0.967 & 0.999 & 0.981 & 0.966 & 0.967 & 0.967 & 0.967 & 0.967 & 0.967 & 0.967\tabularnewline
\multicolumn{1}{l}{} &  &  &  &  &  &  &  &  &  &  &  &  &  & \tabularnewline
\rowcolor{gray1} & $\mathscr{D}(\widetilde{I}_{G2})$ & $\mathscr{D}(\widetilde{I}_{G1})$ & $\mathscr{D}(\widetilde{K})$ & $\mathscr{D}(\widetilde{I}_{1})$ & $\mathscr{D}(\widetilde{I}_{2})$ & $\mathscr{D}(\widetilde{I}_{\alpha})$ & $\mathscr{D}(\widetilde{I}_{\alpha,\beta})$ & $\mathscr{D}(\widetilde{GW})$ & $\mathscr{D}(\widetilde{I}_{\textit{SH}})$ & $\mathscr{D}(\widetilde{\textit{RE}}_{1})$ & $\mathscr{D}(\widetilde{\textit{RE}}_{2})$ & $\mathscr{D}(\widetilde{\textit{CI}})$ & $\mathscr{D}(\widetilde{L})$ & $\mathscr{D}(O)$\tabularnewline
\hline 
 & 7.733 & 4.905 & 1.692 & 1.337 & 10.112 & 1.590 & 1.732 & 4.936 & 4.129 & 7.296 & 4.675 & 7.503 & 7.143 & 4.906\tabularnewline
\end{tabular}
\end{sidewaystable}

\end{document}